\documentclass[a4paper]{article}

\usepackage{latexsym}
\usepackage{amssymb,amsmath,,amsthm}
\usepackage{url}
\usepackage{bbm}

\usepackage{tikz}

\usetikzlibrary{positioning,arrows,fit}
\usetikzlibrary{decorations.pathmorphing}

\usepackage{color}

\newtheorem{definition}{Definition}
\newtheorem{theorem}[definition]{Theorem}
\newtheorem{lemma}[definition]{Lemma}

\newtheorem{question}[definition]{Question}
\newtheorem{remark}[definition]{Remark}

%*******************************************

\newcommand{\OWL}{\textsl{OWL\,2}}
\newcommand{\OWLQL}{\textsl{OWL\,2\,QL}}
\newcommand{\OWLEL}{\textsl{OWL\,2\,EL}}

\newcommand{\DL}{\textsl{DL-Lite}}
\newcommand{\q}{\boldsymbol q}

\newcommand{\NP}{\textsc{NP}}

\newcommand{\Ppoly}{\textsc{P}/\text{poly}}

\newcommand{\PSpace}{\textsc{PSpace}}
\newcommand{\PH}{\textsc{PH}}

\newcommand{\ind}{\mathop{\mathsf{ind}}}

\newcommand{\cli}{\textsc{Clique}}
\newcommand{\mat}{\textsc{Matching}}
\newcommand{\gen}{\textsc{Gen}}

\newcommand{\A}{\mathcal{A}}
\newcommand{\T}{\mathcal{T}}

\newcommand{\I}{\mathcal{I}}

\newcommand{\CIK}{\mathcal{C}_\mathcal{K}}

\newcommand{\C}{\mathbf{C}}

\newcommand{\phifn}[1][\vec{\alpha}]{\varphi_{f^n}^{#1}}

%********************************

%\renewcommand{\nb}[1]{}

\begin{document}

\title{Exponential Lower Bounds and Separation for Query Rewriting}
\author{S.~Kikot,\!$^1$  R.~Kontchakov,\!$^1$ V.~Podolskii$^2$ and M.~Zakharyaschev$^1$\\[5pt]
    $^1$ Department of Computer Science and Information Systems\\
   Birkbeck, University of London, U.K.\\
   \url{{kikot,roman,michael}@dcs.bbk.ac.uk}\\
   $^2$ Steklov Mathematical Institute, Moscow, Russia\\
   \url{podolskii@mi.ras.ru}}

%On the Size of First-Order and Nonrecursive Datalog Rewritings for Conjunctive Queries over \OWLQL{} Ontologies

\maketitle

\begin{abstract}
We establish connections between the size of circuits and formulas computing monotone Boolean functions and the size of first-order and nonrecursive Datalog rewritings for conjunctive queries over \OWLQL{} ontologies. We use known lower bounds and separation results from  circuit complexity to prove similar results for the size of rewritings that do not use non-signature constants. For example, we show that, in the worst case, positive existential and nonrecursive Datalog rewritings are exponentially longer than the original queries; nonrecursive Datalog rewritings are in general exponentially more succinct than positive existential rewritings; while first-order rewritings can be superpolynomially more succinct than positive existential rewritings.
\end{abstract}

%********************************

\section{Introduction}

First-order (FO) rewritability is the key concept of ontology-based data access (OBDA)~\cite{DFKM*08,HMAM*08,PLCD*08}, which is believed to lie at the foundations of the next generation of information systems. An ontology language $\mathcal{L}$ enjoys \emph{FO-rewritability} if any conjunctive query $\q$ over an ontology $\mathcal{T}$, formulated in $\mathcal{L}$, can be transformed into an FO-formula $\q'$ such that, for any data $\mathcal{A}$, all certain answers to $\q$  over the knowledge base $(\mathcal{T}, \mathcal{A})$ can be found by querying $\q'$ over $\mathcal{A}$ using a standard relational database management system (RDBMS). Ontology languages with this property include the \OWLQL{} profile of the Web Ontology Language \OWL, which is based on description logics of the \DL{} family~\cite{CDLLR07,ACKZ09}, and fragments of Datalog$^\pm$ such as linear or sticky sets of TGDs~\cite{CaliGL09,CaliGP10}. Various rewriting techniques have been implemented  in the systems QuOnto~\cite{ACDL*05}, REQUIEM~\cite{Perez-UrbinaMH09}, Presto~\cite{RosatiAKR10}, Nyaya~\cite{2011_Gottlob}, IQAROS\footnote{\url{http://code.google.com/p/iqaros/}} and Quest\footnote{\url{http://obda.inf.unibz.it/protege-plugin/quest/quest.html}}.

OBDA via FO-rewritability relies on the empirical fact that RDBMSs are usually very efficient in practice. However, this does not mean that they can efficiently evaluate any given query: after all, for expression complexity, database query answering is \PSpace-complete for FO-queries and \NP-complete for conjunctive queries (CQs).
Indeed, the first `na\"ive' rewritings of CQs over \OWLQL{} ontologies turned out to be too lengthy even for modern RDBMSs~\cite{CDLLR07,Perez-UrbinaMH09}. The obvious next step was to develop various optimisation techniques~\cite{RosatiAKR10,2011_Gottlob,DL-2011-obda,ISWC-2011}; however, they still produced exponential-size --- $O((|\mathcal{T}| \cdot |{\q}|)^{|\q|})$ --- rewritings in the worst case. An alternative two-step \emph{combined approach} to OBDA with \OWLEL{}~\cite{EL} and \OWLQL{}~\cite{KR10our,IJCAIbest} first expands the data by applying the ontology axioms and introducing new individuals required by the ontology, and only then rewrites the query over the expanded data. Yet, even with these extra resources a simple polynomial rewriting was constructed only for the fragment of \OWLQL{} without role inclusions; the rewriting for the full language remained exponential.
A breakthrough seemed to come in~\cite{GottlobS11}, which showed that one can construct, in polynomial time, a nonrecursive Datalog rewriting for some fragments of Datalog$^\pm$ containing \OWLQL. However, this rewriting uses the built-in predicate $\ne$ and numerical constants that are not present in the original query and ontology. Without such  additional constants, as shown in~\cite{KKZ-DL11}, no FO-rewriting for \OWLQL{} can be \emph{constructed} in polynomial time (it remained unclear, however, whether such an FO-rewriting of  polynomial size \emph{exists}). 

These developments bring forward a spectrum of theoretical and practical questions that could influence the future of OBDA. What is the worst-case size of FO- and nonrecursive Datalog rewritings for CQs over \OWLQL{} ontologies? What is the type/shape/size of rewritings we should aim at to make OBDA with \OWLQL{} efficient? What extra means (e.g., built-in predicates and constants) can be used in the rewritings?

In this paper, we investigate the worst-case size of FO- and nonrecursive Datalog rewritings for CQs over \OWLQL{} ontologies depending on the available means. We distinguish between `pure' rewritings, which cannot use constants that do not occur in the original query, and `impure' ones, where such constants are allowed.  Our results can be summarised as follows:
\begin{itemize}\itemsep=0pt
\item[--] An exponential blow-up is unavoidable for pure positive existential rewritings and pure nonrecursive Datalog rewritings. Even pure FO-rewritings with $=$ can blow-up superpolynomially unless $\NP \subseteq \Ppoly$.

\item[--] Pure nonrecursive Datalog rewritings are in general exponentially more succinct than pure positive existential rewritings.

\item[--] Pure FO-rewritings can be superpolynomially more succinct than pure positive existential rewritings.

\item[--] Impure positive existential rewritings can always be made polynomial, and so they are exponentially more succinct than pure rewritings.
\end{itemize}
We obtain these results by first establishing connections between pure rewritings for conjunctive queries over \OWLQL{} ontologies and circuits for monotone Boolean functions, and then using known lower bounds and separation results for the circuit complexity of such functions as $\cli_{n,k}$ `a graph with $n$ nodes contains a $k$-clique' or $\mat_{2n}$ `a bipartite graph with $n$ vertices in each part has a perfect matching.'
%

%*******************************

\section{Queries over \OWLQL{} Ontologies}

By a \emph{signature}, $\Sigma$, we understand in this paper any set of constant symbols and predicate symbols (with their arity). Unless explicitly stated otherwise, $\Sigma$ does \emph{not} contain any predicates with fixed semantics, such as $=$ or $\ne$. In the description logic (or \OWLQL{}) setting, constant symbols are called \emph{individual names}, $a_i$, while unary and binary  predicate symbols are called  \emph{concept names}, $A_i$, and \emph{role names}, $P_i$, respectively, where $i\ge 1$.

The language of \OWLQL{} is built using those names in the following way. The \emph{roles} $R$, \emph{basic concepts} $B$ and \emph{concepts} $C$  of \OWLQL{} are defined by the grammar:\footnote{We do not consider data properties, attributes and role (ir)reflexivity constraints.} 
\begin{align*}
R \quad &::=\quad P_i \quad\mid\quad P_i^-, \tag{\mbox{\small $P_i(x,y) \ \mid \  P_i(y,x)$}} \\
B \quad &::=\quad \bot \quad\mid\quad
A_i \quad\mid\quad \exists R, \tag{\mbox{\small $\bot \  \mid \  A_i(x) \ \mid \ \exists y\, R(x,y)$}}\\
C \quad &::=\quad B \quad\mid\quad \exists R.B, \tag{\mbox{\small $B(x) \ \mid \ \exists y\, (R(x,y) \land B(y))$}}
\end{align*}
where the formulas on the right give a first-order translation of the \OWLQL{} constructs.
An \OWLQL{} \emph{TBox}, $\mathcal{T}$, is a finite set of \emph{inclusions} of the form
\begin{align*}
&B \sqsubseteq C, \tag{\mbox{\small $\forall x \, (B(x) \to C(x))$}}  \\
&R_1 \sqsubseteq R_2, \tag{\mbox{\small $\forall x,y \, (R_1(x,y) \to R_2(x,y))$}}\\
& B_1 \sqcap B_2 \sqsubseteq \bot, \tag{\mbox{\small $\forall x \, (B_1(x) \land B_2(x) \to \bot)$}}\\
& R_1 \sqcap R_2 \sqsubseteq \bot. \tag{\mbox{\small $\forall x,y \, (R_1(x,y) \land R_2(x,y) \to \bot)$}}
\end{align*}
Note that concepts of the form $\exists R.B$ can only occur in the right-hand side of concept inclusions in \OWLQL.
An \emph{ABox}, $\mathcal{A}$, is a finite set of \emph{assertions} of
the form $A_k(a_i)$ and $P_k(a_i,a_j)$. $\mathcal{T}$ and $\mathcal{A}$ together form the \emph{knowledge base} (KB) $\mathcal{K}=(\mathcal{T},\mathcal{A})$. The semantics for \OWLQL{} is defined in the usual way \cite{BCMNP03}, based on interpretations $\mathcal{I} = (\Delta^\mathcal{I}, \cdot^\mathcal{I})$ with domain $\Delta^\mathcal{I}$ and interpretation function $\cdot^\mathcal{I}$.

The set of individual names in an ABox $\mathcal{A}$ will be denoted by $\ind(\mathcal{A})$.
For concepts or roles $E_1$ and $E_2$, we write $E_1 \sqsubseteq_\mathcal{T} E_2$  if $\mathcal{T} \models E_1 \sqsubseteq E_2$; and we set $[E] = \{ E' \mid E \sqsubseteq_\mathcal{T} E' \text{ and } E' \sqsubseteq_\mathcal{T} E \}$.

A \emph{conjunctive query} (CQ) $\q(\vec{x})$ is a first-order formula $\exists \vec{y}\, \varphi(\vec{x}, \vec{y})$,
where $\varphi$ is constructed, using $\land$, from atoms of the form $A_k(t_1)$ and $P_k(t_1,t_2)$, where each $t_i$ is a \emph{term} (an individual or a variable from  $\vec{x}$ or $\vec{y}$). A tuple $\vec{a}\subseteq \ind (\mathcal{A})$ is a \emph{certain answer} to $\q(\vec{x})$ over $\mathcal{K} = (\mathcal{T},\mathcal{A})$ if  $\mathcal{I} \models \q(\vec{a})$ for all models
$\mathcal{I}$ of $\mathcal{K}$; in this case we write $\mathcal{K} \models \q(\vec{a})$.
%To simplify notation, we will often identify $\q$ with the set of its atoms and use $P^-(t,t') \in \q$ as a synonym of $P(t',t)\in \q$; $\term (\q)$ is the set of terms in $\q$. We call $\q$ \emph{tree-shaped} if its primal graph $(\term(\q), \{\{t, t'\} \mid R(t, t') \in \q\})$ is a tree.

Query answering over \OWLQL{} KBs is based on the fact that, for any consistent KB $\mathcal{K} = (\mathcal{T}, \mathcal{A})$, there is an interpretation $\CIK$ such that, for all CQs $\q(\vec{x})$ and $\vec{a} \subseteq \ind(\mathcal{A})$, we have $\mathcal{K} \models \q(\vec{a})$ iff $\CIK \models \q(\vec{a})$.
The interpretation $\CIK$, called the \emph{canonical model} of $\mathcal{K}$, can be constructed as follows. For each pair $[R],[B]$ with $\exists R.B$ in $\mathcal{T}$ (we assume $\exists R$ is just a shorthand for $\exists R.\top$), we introduce a fresh symbol $w_{[RB]}$ and call it the \emph{witness for} $\exists R.B$. We write $\mathcal{K} \models C(w_{[RB]})$ if $\exists R^-  \sqsubseteq_\mathcal{T} C$ or $B \sqsubseteq_\mathcal{T} C$.
Define a \emph{generating relation}, $\leadsto$, on the set of these witnesses together with $\ind(\mathcal{A})$ by taking:
\begin{description}
\item[--] $a \leadsto w_{[RB]}$ if $a \in \ind(\mathcal{A})$, $[R]$ and $[B]$ are $\sqsubseteq_\mathcal{T}$-minimal such that $\mathcal{K} \models \exists R .B(a)$ and there is no $b\in\ind(\mathcal{A})$ with $\mathcal{K}\models R(a,b) \land B(b)$;

\item[--] $w_{[R'B']} \leadsto w_{[RB]}$ if, for some $u$, $u \leadsto w_{[R'B']}$, $[R]$ and $[B]$ are $\sqsubseteq_\mathcal{T}$-minimal with $\mathcal{K}\models \exists R.B(w_{[R'B']})$
    %$\exists S^- \sqcap B \sqsubseteq_\mathcal{T} \exists R.B'$,
    and it is not the case that $R' \sqsubseteq_\mathcal{T} R^-$ and $\mathcal{K} \models B'(u)$.
\end{description}
If $a\leadsto w_{[R_1B_1]} \leadsto \dots
\leadsto w_{[R_{n}B_{n}]}$, $n\ge 0$, then we say that $a$
\emph{generates the path} $a w_{[R_1B_1]} \cdots w_{[R_nB_n]}$.
Denote by $\mathsf{path}_\mathcal{K}(a)$ the set of paths generated by $a$, and by
$\mathsf{tail}(\pi)$ the last element in~$\pi \in \mathsf{path}_\mathcal{K}(a)$.
$\CIK$ is defined by taking:
\begin{align*}
\Delta^{\CIK} = & \bigcup_{a \in
\ind(\mathcal{A})}\mathsf{path}_\mathcal{K}(a), \quad
 a^{\CIK} ~=~  a, \text{ for } a \in \ind(\mathcal{A}), \\
 A^{\CIK} =~ &\{ \pi \in \Delta^{\CIK}\mid \mathcal{K} \models A(\mathsf{tail}(\pi)) \}, \\
 P^{\CIK} =~ & \{ (a,b)\in \ind(\mathcal{A})\times \ind(\mathcal{A}) \mid  \mathcal{K}\models P(a,b) \} \; \cup \\
 & \{ (\pi,\pi \cdot w_{[RB]}) \mid \mathsf{tail}(\pi) \leadsto w_{[RB]},\ R\sqsubseteq_\mathcal{T} P \} \cup{} \\
& \{(\pi\cdot w_{[RB]},\pi) \mid \mathsf{tail}(\pi) \leadsto w_{[RB]},\  R \sqsubseteq_\mathcal{T} P^-\}.%
\end{align*}
%
%$\Delta^{\CIK} \setminus \ind(\mathcal{A})$ is called the \emph{tree part} of $\CIK$.
The following result is standard:
\begin{theorem}[\cite{CDLLR07,KR10our}]
For every \OWLQL{} KB $\mathcal{K}=(\mathcal{T}, \mathcal{A})$, every CQ $\q(\vec{x})$ and every $\vec{a} \subseteq \ind(\mathcal{A})$, $\mathcal{K} \models \q(\vec{a})$ iff $\CIK \models \q(\vec{a})$.
\end{theorem}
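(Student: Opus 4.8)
The plan is to prove the two directions separately, exploiting the fact that $\CIK$ is a \emph{universal} (canonical) model of $\mathcal{K}$: for the easy direction one only needs that $\CIK$ is \emph{some} model of $\mathcal{K}$, and for the hard direction one needs that $\CIK$ maps homomorphically into every model of $\mathcal{K}$. Throughout I may assume $\mathcal{K}$ is consistent, since otherwise $\CIK$ is not defined and $\mathcal{K}\models\q(\vec a)$ holds vacuously; the statement is understood for consistent KBs.

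For the direction $\mathcal{K}\models\q(\vec a)\Rightarrow\CIK\models\q(\vec a)$ it suffices to verify that $\CIK$ is itself a model of $\mathcal{K}$. The ABox assertions hold by the definition of $A^{\CIK}$ and $P^{\CIK}$ on $\ind(\mathcal{A})$, together with $\mathcal{K}\models A(a)$ whenever $A(a)\in\mathcal{A}$, etc. For the TBox, the role inclusions and the concept inclusions of the forms $B\sqsubseteq A_i$, $B\sqsubseteq\exists R$ and $B\sqsubseteq\exists R.B'$ follow from the fact that the concept/role labels attached to each element are closed under $\sqsubseteq_\mathcal{T}$ and that $\leadsto$ supplies, for every $\sqsubseteq_\mathcal{T}$-minimal existential requirement unsatisfied at an element, a fresh successor path (unless one is already available among the ABox individuals or as the element's own predecessor); the disjointness inclusions $B_1\sqcap B_2\sqsubseteq\bot$ and $R_1\sqcap R_2\sqsubseteq\bot$ hold because $\mathcal{K}$ is consistent.

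For the converse, fix an arbitrary model $\mathcal{I}\models\mathcal{K}$ and build a homomorphism $h\colon\CIK\to\mathcal{I}$ with $h(a)=a^{\mathcal{I}}$ for $a\in\ind(\mathcal{A})$, by induction on the length of paths. In the inductive step, if $\pi\cdot w_{[RB]}\in\Delta^{\CIK}$ with $\tail(\pi)\leadsto w_{[RB]}$, then $\mathcal{K}\models\exists R.B(\tail(\pi))$, and since $h(\pi)$ already satisfies in $\mathcal{I}$ every concept entailed of $\tail(\pi)$ and $\mathcal{I}\models\mathcal{K}$, there is $d\in\Delta^{\mathcal{I}}$ with $(h(\pi),d)\in R^{\mathcal{I}}$ and $d\in B^{\mathcal{I}}$; set $h(\pi\cdot w_{[RB]})=d$. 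One then checks that $h$ preserves concept and role names: if $\pi\in A^{\CIK}$ then $\mathcal{K}\models A(\tail(\pi))$, and this forces $h(\pi)\in A^{\mathcal{I}}$ — directly when $\tail(\pi)$ is an individual (as $\mathcal{I}\models\mathcal{K}$), and via $B\sqsubseteq_\mathcal{T} A$ or $\exists R^-\sqsubseteq_\mathcal{T} A$ together with $d\in B^{\mathcal{I}}$, $(h(\pi'),d)\in R^{\mathcal{I}}$ and $\mathcal{I}\models\mathcal{T}$ when $\tail(\pi)=w_{[RB]}$; the role atoms are handled analogously using $\mathcal{I}\models R\sqsubseteq P$ whenever $R\sqsubseteq_\mathcal{T} P$. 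Since $\q$ is a conjunction of atoms under existential quantifiers, homomorphisms preserve satisfaction of $\q$: composing a match of $\q$ in $\CIK$ sending $\vec x$ to $\vec a$ with $h$ yields a match in $\mathcal{I}$ sending $\vec x$ to $\vec a^{\mathcal{I}}$, so $\mathcal{I}\models\q(\vec a)$; as $\mathcal{I}$ was arbitrary, $\mathcal{K}\models\q(\vec a)$.

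The main obstacle is the model check in the first direction, and within it the existential inclusions $B\sqsubseteq\exists R.B'$ at \emph{every} element, including the witnesses: this is precisely what the side conditions on $\leadsto$ are designed to ensure — the clause forbidding a $b\in\ind(\mathcal{A})$ with $\mathcal{K}\models R(a,b)\land B(b)$ rules out redundant witnesses at ABox individuals, and the clause "not the case that $R'\sqsubseteq_\mathcal{T} R^-$ and $\mathcal{K}\models B'(u)$" prevents a witness from re-generating a successor already present as its predecessor $u$. Showing that, under these conditions, every element nonetheless has all required successors (and that no disjointness constraint is thereby violated) is the crux; the homomorphism bookkeeping in the second direction is then routine. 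A secondary point to pin down is that the syntactically defined relation $\mathcal{K}\models C(w_{[RB]})$ on witnesses agrees with membership of the corresponding path in $C^{\CIK}$, which is what makes the inductive witness-creation step legitimate.
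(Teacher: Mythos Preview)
The paper does not give its own proof of this theorem: it is stated as a known (``standard'') result and attributed to \cite{CDLLR07,KR10our}, with no argument supplied. So there is nothing to compare your proposal against within the paper itself.

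That said, your sketch is exactly the standard universal-model argument one finds in those references: verify that $\CIK$ is a model of $\mathcal{K}$ (giving $\mathcal{K}\models\q(\vec a)\Rightarrow\CIK\models\q(\vec a)$), and for the converse build, by induction on path length, a homomorphism from $\CIK$ into an arbitrary model $\mathcal{I}$ of $\mathcal{K}$ that fixes ABox individuals, then use preservation of CQs under homomorphisms. Your identification of the ``crux'' --- that the side conditions on $\leadsto$ are precisely what is needed to avoid generating redundant witnesses while still satisfying every existential requirement --- is accurate, and your remark that one must check the syntactic predicate $\mathcal{K}\models C(w_{[RB]})$ coincides with semantic membership in $C^{\CIK}$ is the right bookkeeping point. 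Nothing in the sketch is wrong; it would need to be fleshed out (in particular the case analysis showing that when $\leadsto$ blocks a witness the requirement is already met by the predecessor or an ABox individual), but the plan is sound and matches the literature.
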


%*************************

\section{Query Rewriting}

Let $\Sigma$ be a signature that can be used to formulate queries and ABoxes (remember that $\Sigma$ does not contain any built-in predicates).
%
%Unless otherwise indicated, all syntactical terms below (queries, TBoxes, etc.) are assumed to be over some fixed signature $\Sigma$ (which does not contain any built-in constants and predicates, such as numbers, $=$ and $\ne$).
%
%We denote the set of individuals in $\Sigma$ by $\ind(\Sigma)$. 
Given an ABox $\mathcal{A}$ over $\Sigma$, define $\mathcal{I}_\mathcal{A}$ to be the interpretation whose domain consists of all individuals in $\Sigma$ (even if they are not in $\ind(\mathcal{A})$) and  such that $\mathcal{I}_\mathcal{A} \models E(\vec{a})$ iff $E(\vec{a}) \in \mathcal{A}$,  for all predicates $E(\vec{x})$.

Given a CQ $\q(\vec{x})$ and an \OWLQL{} TBox $\mathcal{T}$, a first-order formula $\q'(\vec{x})$ over $\Sigma$ is called an \emph{FO-rewriting for $\q(\vec{x})$ and $\mathcal{T}$ over $\Sigma$} if, for any ABox $\mathcal{A}$ over $\Sigma$ and any $\vec{a} \subseteq \ind(\mathcal{A})$, we have $(\mathcal{T},\mathcal{A}) \models \q(\vec{a})$ iff $\mathcal{I}_\mathcal{A} \models \q'(\vec{a})$.
If $\q'$ is an FO-rewriting of the form $\exists \vec{y}\, \varphi(\vec{x}, \vec{y})$, where $\varphi$ is built from atoms using only $\land$ and $\lor$, then we call $\q'(\vec{x})$ a \emph{positive existential rewriting for $\q(\vec{x})$ and $\mathcal{T}$ over $\Sigma$} (or a \emph{PE-rewriting}, for short). The \emph{size} $|\q'|$ of $\q'$ is the number of symbols in it.

All known FO-rewritings for CQs and \OWLQL{} ontologies are of exponential size in the worst case. More precisely, for any CQ $\q$ and any \OWLQL{} TBox $\mathcal{T}$, one can construct a PE-rewriting of size $O((|\mathcal{T}| \cdot |\q|)^{|\q|})$~\cite{CDLLR07,Perez-UrbinaMH09,Chortaras-etal2011,2011_Gottlob,KR10our}. One of the main results of this paper is that this lower bound cannot be substantially improved in general. On the other hand, we shall see that FO-rewritings can be superpolynomially more succinct than pure PE-rewritings.

%It turns out, however, that if we extend the language of FO-rewritings with the built-in predicate $=$ and constants for natural numbers then a polynomial-size FO-rewriting can be constructed for any $\q$ and $\mathcal{T}$ using the ideas of \cite{GottlobS11}; see Appendix~\ref{app} for details. (See also \cite{Avigad01} which shows how two different constants and $=$ can be used to eliminate definitions from first-order theories without an exponential blow-up.) To distinguish between these two types of FO-rewritings, we call rewritings without built-in predicates and additional constants \emph{pure}. The impure FO-rewritings with built-in $=$ will be denoted by $\FO^=$, while $\FO^{=,M}$-rewritings, for $M \subseteq \mathbb N$, can also use the fresh constants in $M$, which do not occur in any ABox $\A$.
%
%One of the main results of this paper is that impure PE$^{=,\mathbb N}$- and $\FO^{=,\mathbb N}$-rewritings for CQs and \OWLQL{} ontologies are exponentially more succinct than pure PE- and FO-rewritings.
%%
%It is to be noted, however, that database systems may perform poorly for short impure rewritings because the extra constants and predicates are used to encode very complex structures such as (parts of) canonical models and derivations, which are not involved in database indexing.

We shall also consider query rewritings in the form of nonrecursive Datalog queries.
We remind the reader (for details see, e.g.,~\cite{CeriGT89}) that a \emph{Datalog program}, $\Pi$, is a finite set of Horn clauses
$$
\forall \vec{x}\, (A_1 \land \dots \land A_m \to A_0),
$$
where each $A_i$ is an atom of the form $P(t_1,\dots,t_l)$ and each $t_j$ is either a variable from $\vec{x}$ or a constant. $A_0$ is called the \emph{head} of the clause, and $A_1,\dots,A_m$ its \emph{body}. All variables occurring in the head $A_0$ must also occur in the body, i.e., in one of $A_1,\dots,A_m$.
A predicate $P$ \emph{depends} on a predicate $Q$ in $\Pi$ if $\Pi$ contains a clause whose head's predicate is $P$ and whose body contains an atom with predicate $Q$. A Datalog program $\Pi$ is called \emph{nonrecursive} if this dependence relation for $\Pi$ is acyclic. A \emph{nonrecursive Datalog query} consists of a nonrecursive Datalog program $\Pi$ and a \emph{goal} $G$, which is just a predicate.  Given an ABox $\mathcal{A}$, a tuple $\vec{a} \subseteq \ind (\mathcal{A})$ is called a \emph{certain answer} to $(\Pi,G)$ over $\mathcal{A}$ if $\Pi,\mathcal{A} \models G(\vec{a})$. The \emph{size} $|\Pi|$ of $\Pi$ is the number of symbols in $\Pi$.

We distinguish between \emph{pure} and \emph{impure} nonrecursive Datalog queries~\cite{BenediktG10}. In a \emph{pure query} $(\Pi,G)$, the clauses in $\Pi$ do not contain \emph{constant symbols} in their heads. One reason for considering only pure queries in the OBDA setting is that impure ones can have too much impact on the data. For example, an impure query can explicitly add a ground atom $A_0(\vec{a})$ to the database, which has nothing to do with the intensional knowledge in the background ontologies.
In fact, impure nonrecursive Datalog queries are known to be more succinct than pure ones.

Given a CQ $\q(\vec{x})$ and an \OWLQL{} TBox $\mathcal{T}$, a pure nonrecursive Datalog query $(\Pi,G)$ is called a \emph{nonrecursive Datalog rewriting for $\q(\vec{x})$ and $\mathcal{T}$ over $\Sigma$} (or an \emph{NDL-rewriting}, for short) if, for any ABox $\mathcal{A}$ over $\Sigma$ and any $\vec{a} \subseteq \ind(\mathcal{A})$, we have $(\mathcal{T},\mathcal{A}) \models \q(\vec{a})$ iff $\Pi,\mathcal{A} \models G(\vec{a})$ (note that $\Pi$ may define predicates that are not in $\Sigma$, but may not use non-signature constants). 
Similarly to FO-rewritings, known NDL-rewritings for \OWLQL{} are of exponential size~\cite{RosatiAKR10,2011_Gottlob}. Here we show that, in general, one cannot make NDL-rewritings shorter. On the other hand, NDL-rewritings can be exponentially more succinct than PE-rewritings.

The rewritings can be much shorter if non-signature predicates and constants become available. As follows from~\cite{GottlobS11}, every CQ over an \OWLQL{} ontology can be rewritten as a polynomial-size nonrecursive Datalog query if we can use the inequality predicate and  at least two distinct constants (cf.\ also \cite{Avigad01} which shows how two constants and $=$ can be used to eliminate definitions from first-order theories without an exponential blow-up). In fact, we observe that, using equality and two distinct constants, any CQ over an \OWLQL{} ontology can be rewritten into a PE-query of polynomial size.

%Note that $(\Pi,G)$ may contain (non-built-in) predicates that are not in $\Sigma$ and used as definitions or the goal. However, no extra constants are allowed in $(\Pi,G)$.\nb{do we need this?}

%Gottlob and Schwentick~\cite{GottlobS11} constructed a nonrecursive Datalog rewriting for CQs and ontologies given in various fragments of Datalog$^\pm$, which also uses a built-in predicate $\ne$, two extra different \nb{added} constants 0, 1, and built-in unary predicates $Zero$ and $One$ giving these constants.\nb{also new predicates and lots of variables; check} We call a nonrecursive Datalog rewriting $(\Pi,G)$ \emph{pure} if $\Pi$ is pure and does not use predicates and constants that are not in the given signature $\Sigma$. The polynomial \emph{impure} rewriting of~\cite{GottlobS11} uses the built-in predicates and constants to encode in a succinct way a part of the canonical model that is relevant for answering the query.

%(Comment here on the relation between PE-rewritings and nonrecursive Datalog rewritings.\nb{TBD})

%******************************

\section{Boolean Functions and Circuits}\label{sec:circuits}

In this section we give a brief introduction to Boolean circuits and remind the reader of the results on monotone circuit complexity that we will use. %in what follows.

An \emph{$n$-ary Boolean function}, for $n\ge 1$, is a function from $\{0,1\}^n$ to $\{0,1\}$. A Boolean function $f$ is \emph{monotone} if $f(\vec{\alpha}) \leq f(\vec{\alpha}')$, for all $\vec{\alpha}\leq \vec{\alpha}'$, where $\leq$ is the component-wise relation $\leq$ on vectors of $\{0,1\}$.

%A \emph{family of Boolean functions} is a sequence $f^n$, $n\ge 1$, where each $f^n$ is an $n$-ary Boolean function. We will deliberately abuse notation and employ the same symbol such as $f$ to denote both a family of Boolean functions and a particular function in the family. For example, by saying that a function $f$ is in the class $\coNP$ we mean that there exist a polynomial $p \colon \mathbb N \to \mathbb N$ and a Boolean function $g$, computable in polynomial time, such that, for $n\geq 1$ and $\vec{x} \in \{0,1\}^n$, we have $f(\vec{x}) =0$ iff $g(\vec{x},\vec{y})=0$, for some $\vec{y} \in \{0,1\}^{p(n)}$ (think of $f$ as the characteristic function of a language which is in \coNP). The variables $\vec{y}$ in $g$ are called \emph{advice variables}.

We remind the reader (for more details see, e.g.,~\cite{Arora&Barak09,Jukna12}) that an $n$-\emph{input Boolean circuit}, $\C$, is a directed acyclic graph with $n$ sources, \emph{inputs}, and one sink, \emph{output}. Every non-source node of $\C$ is called a \emph{gate}; it is labelled with either $\land$ or $\lor$, in which case it has two incoming edges, or with $\neg$, in which case it has one incoming edge. A circuit is \emph{monotone} if it contains only $\land$- and $\lor$-gates. We think of a \emph{Boolean formula} as a circuit in which every gate has at most one outgoing edge. For an input $\vec{\alpha} \in \{0,1\}^n$, the \emph{output} of $\C$ on $\vec{\alpha}$ is denoted by $\C(\vec{\alpha})$, and $\C$ is said to \emph{compute} an $n$-ary Boolean function $f$ if $\C(\vec{\alpha}) = f(\vec{\alpha})$, for every $\vec{\alpha} \in \{0,1\}^n$.
The number of nodes in $\C$ is the \emph{size} of $\C$, denoted by $|\C|$. 

%Given a function $T \colon \mathbb N \to \mathbb N$, a \emph{$T$-size family of circuits} is a sequence $\C^n$, for $n\ge 1$, of $n$-input Boolean circuits of size $|\C^n| \le T(n)$. As with Boolean functions, we abuse notation and employ $\C$ to denote a family of Boolean circuits.
%
%The class of languages that are decidable by families of polynomial-size circuits is denoted by \Ppoly.

A \emph{family of Boolean functions} is a sequence $f^1,f^2,\dots$,  where each $f^n$ is an $n$-ary Boolean function.
We say that a family $f^1,f^2,\dots$ is in the complexity class \NP{} if there exist polynomials $p$ and $T$ and, for each $n \geq 1$, a Boolean circuit $\C^n$ with $n + p(n)$ inputs such that  $|\C^n|\leq T(n)$ and, for each $\vec{\alpha} \in \{0,1\}^n$, we have
\begin{equation*}
f^n(\vec{\alpha}) =1\qquad\text{ iff }\qquad\C^n(\vec{\alpha},\vec{\beta})=1,\quad\text{for some }\vec{\beta} \in \{0,1\}^{p(n)}.
\end{equation*}
The additional $p(n)$ inputs for $\vec{\beta}$ in the $\C^n$ are called \emph{advice inputs}.

We shall use three well-known families of monotone Boolean functions in $\NP$:
\begin{description}
\item[\normalfont $\cli_{n,k}$] is the function of $n(n-1)/2$ variables $e_{ij}$, $1 \leq i < j\le n$, which returns 1 iff the graph with  vertices $\{1,\dots,n\}$ and  edges $\{ \{i,j\} \mid e_{ij}=1\}$ contains a $k$-clique. A series of papers,  started by Razborov's breakthrough~\cite{Razborov85}, gave an exponential lower bound for the size of monotone circuits computing $\cli(n,k)$: $2^{\Omega(\sqrt{k})}$ for all $k \leq \frac{1}{4} (n/ \log n)^{2/3}$~\cite{AlonB87}. For monotone formulas, an even better lower bound is known: $2^{\Omega(k)}$  for $k = 2n/3$~\cite{RazW92}. Since $\cli_{n,k}$ is $\NP$-complete, the question whether it can be computed by a polynomial-size Boolean circuit (i.e., belongs to the complexity class \Ppoly{}) is equivalent to whether $\NP \subseteq \Ppoly$, which is an open problem (see e.g.,~\cite{Arora&Barak09}).

It is not hard to see that $\cli_{n,k}$ can be computed by a nondeterministic circuit
of size $O(n^2)$ with $n$ advice variables: the circuit gets a vector $\vec{y} \in \{0,1\}^n$ indicating vertices of a clique as its advice inputs and checks whether the vector has $k$-many 1s and whether any two vertices given by 1s in advice inputs are indeed connected by an edge in the input graph. 
%Indeed, this circuit gets the edges of the graph as $\vec{x} \in \{0,1\}^{n(n-1)/2}$ and the vector $\vec{y} \in \{0,1\}^n$ indicating those vertices of the graph that form the desired clique. The circuit $\C$ checks whether there are $k$-many 1s in $\vec{y}$, which requires $O(n\log n)$ operations, and whether any two vertices given by $1$s in $\vec{y}$ are connected by an edge in $\vec{x}$, which requires $O(n^2)$ additional operations. Finally, $\C$ takes the conjunction of the results of all these checks, which takes $O(n^2)$ more operations. 
%Thus, the CNF corresponding to this circuit will have $O(n^2)$ clauses and $O(n^2)$ variables.

\item[\normalfont $\mat_{2n}$] is the function of $n^2$ variables $e_{ij}$, $1 \leq i,j \leq n$, which returns 1 iff there is a \emph{perfect matching} in the bipartite graph $G$ with vertices $\{v_1^1,\dots,v_n^1,v_1^2,\dots,v_n^2\}$ and edges $\{\{v_i^1,v_j^2\}\mid e_{ij} = 1\}$, i.e., a subset $E$ of edges in $G$ such that every node in $G$ occurs exactly once in $E$. An exponential lower bound $2^{\Omega(n)}$ for the size of monotone formulas computing $\mat_{2n}$ is known~\cite{RazW92}. On the other hand, this function is computable by non-monotone formulas of size $n^{O(\log n)}$~\cite{BorodinGH82}.

$\mat_{2n}$ can also be computed by a Boolean nondeterministic circuit of size $O(n^2)$ with $n^2$ advice variables: the circuit gets the edges a perfect matching in its advice inputs and it checks whether each edge in the perfect matching is an edge of the graph and whether, for each vertex, there is exactly one edge in the perfect matching containing it. %Both checks require $O(n^2)$ gates. %operations.
%Also, the circuit has to check, for each pair of vertices, that whenever there is an edge between them in $\vec{y}$ then there is an edge between them in $\vec{x}$.
%This takes $O(n^2)$ additional operations. Finally, the circuit takes  the conjunction of these two checks.
%The CNF corresponding to this circuit will have $O(n^2)$ clauses and $O(n^2)$ variables.

\item[\normalfont $\gen_{n^3}$] is the function of $n^3$ variables $x_{ijk}$, $1 \le i,j,k \le n$, defined as follows. We say that $1$ \emph{generates} $k \le n$ if either $k=1$ or $x_{ijk}=1$ and $1$ generates both $i$ and  $j$. $\gen_{n^3}(x_{111},\dots, x_{nnn})$ returns $1$ iff $1$ generates $n$. $\gen_{n^3}$ is clearly a monotone Boolean function computable by polynomial-size monotone Boolean circuits. On the other hand, any monotone formula computing $\gen_{n^3}$ is of size  $2^{n^{\varepsilon}}$, for some $\varepsilon>0$~\cite{RazM97}.
\end{description}
The complexity results above will be used in Section~\ref{s:7} to obtain similar bounds for the size of rewritings for certain CQs and \OWLQL{} ontologies encoding these three function. The encoding will require a representation of these functions in terms of CNF.

\section{Circuits, CNFs and OBDA}
\label{sec:queries}

In this section we show how the above families of Boolean functions can be encoded as a CQ answering problem over \OWLQL{} ontologies.
More specifically, for each family $f^1,f^2,\dots$ of Boolean functions, we construct a sequence of \OWLQL{} TBoxes $\T_{f^n}$ and CQs $\q_{f^n}$, as well as ABoxes $\mathcal{A}_{\vec{\alpha}}$, $\vec{\alpha}\in\{0,1\}^n$, with a \emph{single} individual such that
\begin{equation*}
(\T_{f^n},\mathcal{A}_{\vec{\alpha}})\models \q_{f^n} \qquad\text{iff}\qquad f^n(\vec{\alpha})  =1,\qquad\text{ for all } \vec{\alpha}\in\{0,1\}^n.
\end{equation*}
Then we show that rewritings for $\q_{f^n}$ and  $\T_{f^n}$ correspond to Boolean circuits computing $f^n$. The construction proceeds in two steps: first, we represent the $f^n$ by polynomial-size CNFs (in a way similar to the Tseitin transformation~\cite{Tseitin83}),
and then encode those CNFs in terms of \OWLQL{} query answering.

%--- their duals, to be more precise, which are in \coNP{} --- we construct an unsatisfiable CNF such that its \emph{clauses} correspond to the \emph{variables} of the function, and the function returns $1$ on an input $\vec{x}$ iff the CNF is unsatisfiable even if we remove all of its clauses corresponding to those variables that are 0 in $\vec{x}$. The construction is similar to the classic proof of the Cook-Levin theorem stating that satisfiability of Boolean formulas is $\NP$-complete, see for example~\cite{Papadimitriou94,Arora&Barak09}.

Let $f^1,f^2,\dots$ be a family of Boolean functions in \NP{} and $\C^1,\C^2,\dots$ be a family  of circuits computing the $f^n$ (according to the definition above).
We consider the inputs $\vec{x}$ and the advice inputs $\vec{y}$ of $\C^n$ as Boolean variables; each of the gates $g_{1}, \dots, g_{\ell}$ of $\C^n$ is also thought of as a Boolean variable whose value coincides with the output of the gate on a given input. We assume that $\C^n$ contains only  $\neg$- and $\land$-gates, and so can be regarded as a set of equations of the form
\begin{equation*}
g_i = \neg h_i \qquad\text{or}\qquad g_i = h_i \land h_i',
\end{equation*}
where $h_i$ and $h_i'$ are the inputs of the gate $g_i$, that is, either input variables $\vec{x}$, advice variables $\vec{y}$ or other gates $\vec{g} = (g_1,\dots,g_{\ell})$. We assume $g_1$ to be the output of $\C^n$.
Now, with each $f^n$ and each $\vec{\alpha} = (\alpha_1,\dots,\alpha_n)\in\{0,1\}^n$, we associate the following formula in CNF:
\begin{multline*}
\phifn(\vec{x},\vec{y},\vec{g}) \ \ = \ \ \bigwedge_{\alpha_j = 0} \neg x_j \ \  \land \ \ g_1 \ \  \land \hspace*{-0.5em} \bigwedge_{g_i=\neg h_i \text{ in } \C^n} \hspace*{-0.5em}\bigl[(h_i \lor \neg g_i) \land (\neg h_i \lor g_i)\bigr] \ \land {} \\
\bigwedge_{g_i = h_i \land h_i' \text{ in } \C^n} \hspace*{-2em}\bigl[(h_i \vee \neg g_i)\land (h_i' \vee \neg g_{i})\land (\neg h_i \vee \neg h_i' \vee g_i)\bigr].
\end{multline*}
The clauses of the last two conjuncts encode the correct computation of the circuit: they are equivalent to $g_i \leftrightarrow \neg h_i$ and
$g_i \leftrightarrow h_i \land h_i'$, respectively.

\begin{lemma}\label{l:F}
If $f^n$ is a monotone Boolean function then  $f^n(\vec{\alpha}) = 1$ iff $\phifn$ is satisfiable,  for each $\vec{\alpha}\in\{0,1\}^n$.
\end{lemma}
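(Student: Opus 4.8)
The plan is to prove both directions by relating satisfying assignments of $\phifn$ to the computation of the circuit $\C^n$ on $\vec\alpha$.

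First I would establish the easy direction: suppose $\phifn$ is satisfiable, say by an assignment $v$ to $\vec x, \vec y, \vec g$. The conjuncts $\bigwedge_{\alpha_j=0}\neg x_j$ force $v(x_j)=0$ whenever $\alpha_j=0$, so $v(\vec x)\le\vec\alpha$ component-wise (on the $x_j$ with $\alpha_j=1$ the value of $v(x_j)$ is unconstrained, hence $\le 1$). The gate clauses are, as the text notes, equivalent to $g_i\leftrightarrow\neg h_i$ and $g_i\leftrightarrow h_i\land h_i'$, so $v$ restricted to the gates is exactly the value obtained by propagating $v(\vec x),v(\vec y)$ through $\C^n$; in particular $v(g_1)=\C^n(v(\vec x),v(\vec y))$. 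The clause $g_1$ forces $v(g_1)=1$, so $\C^n(v(\vec x),v(\vec y))=1$. Now I invoke monotonicity of $f^n$ and the defining property of the circuit family: $\C^n(\vec\beta,\vec\gamma)=1$ for some $\vec\gamma$ implies $f^n(\vec\beta)=1$, and since $v(\vec x)\le\vec\alpha$ and $f^n$ is monotone, $f^n(\vec\alpha)\ge f^n(v(\vec x))=1$. Here $f^n(v(\vec x))=1$ because $\C^n(v(\vec x),v(\vec y))=1$ witnesses it with advice $v(\vec y)$.

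Conversely, suppose $f^n(\vec\alpha)=1$. By the definition of the circuit family computing $f^n$, there is an advice string $\vec\beta\in\{0,1\}^{p(n)}$ with $\C^n(\vec\alpha,\vec\beta)=1$. Define $v$ by $v(x_j)=\alpha_j$, $v(\vec y)=\vec\beta$, and $v(g_i)$ = the value of gate $g_i$ when $\C^n$ is evaluated on input $(\vec\alpha,\vec\beta)$ — this is well defined since $\C^n$ is acyclic. Then the first conjunct holds because $v(x_j)=\alpha_j=0$ when $\alpha_j=0$; the gate clauses hold because $v$ respects every gate equation by construction; and $v(g_1)=\C^n(\vec\alpha,\vec\beta)=1$, so the clause $g_1$ is satisfied. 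Hence $\phifn$ is satisfiable.

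I do not expect a serious obstacle here; the statement is essentially a sanity check that the CNF encodes "$\exists\,\vec y\ \C^n(\vec\alpha,\vec y)=1$" faithfully, and the only place monotonicity is actually used is to pass from $v(\vec x)\le\vec\alpha$ back to $\vec\alpha$ in the forward direction (without monotonicity one would only get that $\phifn$ is satisfiable iff $f^n(\vec\beta)=1$ for some $\vec\beta\le\vec\alpha$, i.e. iff the monotone closure of $f^n$ at $\vec\alpha$ is $1$). The mild subtlety worth spelling out is exactly this asymmetry: the clauses $\neg x_j$ are imposed only for $\alpha_j=0$, so a satisfying assignment may set some $x_j$ with $\alpha_j=1$ to $0$, which is why the argument produces $v(\vec x)\le\vec\alpha$ rather than $v(\vec x)=\vec\alpha$ and why monotonicity is the right hypothesis.
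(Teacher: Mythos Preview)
Your proof is correct and follows essentially the same argument as the paper's own proof: in one direction you plug $(\vec\alpha,\vec\beta)$ and the resulting gate values into $\phifn$, and in the other you observe that a satisfying assignment has $\vec x$-part $\le\vec\alpha$ and gate values matching the circuit's computation, then use monotonicity. The only cosmetic difference is the order in which you present the two directions and your explicit commentary on why monotonicity is needed.
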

\begin{proof}
$(\Rightarrow)$ Let $f^n(\vec{\alpha}) = 1$. Then $\C^n(\vec{\alpha}, \vec{\beta})=1$, for some $\vec{\beta}$. It can be easily seen that $\phifn(\vec{\alpha},\vec{\beta},\vec{\gamma}) = 1$, where the values $\gamma_i$ in $\vec{\gamma}$ are given by the outputs of the corresponding gates $g_i$ in $\C^n$ on the input $(\vec{\alpha}, \smash{\vec{\beta}})$.

$(\Leftarrow)$ Conversely, let
$\phifn(\vec{\alpha}',\vec{\beta},\vec{\gamma}) = 1$. By the first conjunct of $\phifn$,  $\vec{\alpha}' \leq \vec{\alpha}$.
As $f^n$ is monotone, it is enough to show $f^n(\vec{\alpha}')=1$. This is immediate from the second conjunct of $\phifn$, $g_1$, and an observation that the values $\vec{\gamma}$ are equal to the outputs of the corresponding gates of $\C^n$
on the input $(\vec{\alpha}', \smash{\vec{\beta}})$.
\end{proof}

%Thus, for any monotone Boolean function in \NP{}, we constructed a polyno\-mial-size CNF representing it in the sense of Lemma~\ref{l:F}.
%To make our upper bound results in Section~\ref{s:7} more precise, we require sharper estimations of the size of these CNFs for the functions $\cli_{n,k}$ and $\mat_{2n}$.

%******************

%In the previous section, given a family of monotone Boolean functions $f$, which is in $\coNP$ and for which $f(\vec{1}) = 1$, we defined a family of CNFs $\varphi_f$. Now we use $\varphi_f$ to construct a family of \OWLQL{} TBoxes $\T_f$ and CQs $\q_f$.

The second step of the reduction is to encode satisfiability of $\phifn$ by means of the CQ answering problem in \OWLQL. Denote $\phifn$ for $\vec{\alpha} = (0,\dots,0)$ by $\phifn[]$. It is immediate from the definitions that, for each $\vec{\alpha}\in\{0,1\}^n$, the CNF $\phifn$ can be obtained from $\phifn[]$ by removing the clauses $\neg x_j$ for which $\alpha_j = 1$, $1 \leq j \leq n$.
The CNF $\phifn[]$ contains $d \leq 3|\C^n|$ clauses $C_1,\dots,C_d$ with $N = |\C^n|$ Boolean variables, which will be denoted by $p_1,\dots,p_N$.

Let $P$ be a role name and let $A_i$, $X_i^0$, $X_i^1$ and $Z_{i,j}$ be concept names. Consider the TBox $\mathcal{T}_{f^n}$ containing the following inclusions, for $1 \leq i \leq N$, $1 \leq j \leq d$:
\begin{align*}
& A_{i-1}  \sqsubseteq \exists P^-.X_i^\ell,\quad\text{ for } \ell = 0,1,\\
&X_i^\ell \sqsubseteq A_i, \qquad\text{ for } \ell = 0,1,\\
&\hspace*{2em} X_i^0 \sqsubseteq Z_{i,j}\ \ \text{ if } \ \ \neg p_i \in C_j,\\
&\hspace*{2em} X_i^1 \sqsubseteq Z_{i,j} \ \ \text { if } \ \ p_i \in C_j,\\
& Z_{i,j}  \sqsubseteq \exists P.Z_{i-1,j},\\
& A_0 \sqcap A_i \sqsubseteq \bot,\\
&
A_0 \sqcap \exists P \sqsubseteq \bot,\\
& A_0 \sqcap Z_{i,j} \sqsubseteq \bot, \ \ \text{ for } (i,j) \notin \{(0,1),\dots,(0,n)\}.\hspace*{-30mm}
\end{align*}
It can be seen that $|\T_{f^n}| = O(|\C^n|^2)$. Consider also the CQ
\begin{multline*}
 \q_{f^n} =  \exists \vec{y} \, \exists \vec{z} \ \Bigl[ A_0(y_0) \land  \bigwedge_{i = 1}^N P(y_i,y_{i-1}) \land{} \\
 \bigwedge_{j = 1}^d \Bigl(P(y_N,z_{N-1,j}) \land \bigwedge_{i = 1}^{N-1} P(z_{i,j},z_{i-1,j}) \land Z_{0,j}(z_{0,j}) \Bigr) \Bigr],
\end{multline*}
where $\vec{y} = (y_0,\dots,y_N)$ and
$\vec{z} = (z_{0,1},\dots,z_{N-1,1}, \dots, z_{0,d},\dots,z_{N-1,d})$. Clearly, $|\q_{f^n}| = O(|\C^n|^2)$. Note that $\mathcal{T}_{f^n}$ is acyclic and $\q_{f^n}$ is tree-shaped and has no answer variables.
For each $\vec{\alpha} = (\alpha_1,\dots,\alpha_n)\in\{0,1\}^n$, we set
\begin{equation*}
\mathcal{A}_{\vec{\alpha}} \ \ =  \ \ \bigl\{A_0(a) \bigr\} \ \ \cup \ \ \bigl\{ Z_{0,j}(a) \mid  1\leq j \leq n \text{ and } \alpha_j = 1 \bigr\}.
\end{equation*}

\begin{figure}[ht]
\begin{center}
\begin{tikzpicture}[>=latex, point/.style={circle,draw=black,minimum size=1.5mm,inner sep=0pt}]

%\node [draw,ultra thin,rounded corners,fill=black!5,inner sep=7mm, inner ysep=25mm,label = above : {\footnotesize $A_1$}] at (1.5,0) {};

\node at (-2.2,1) {$\mathcal{C}_{(\mathcal{T}_{f^n}, \mathcal{A}_{\vec{\alpha}})}$};
\node (a) at (-0.3,0) [point,label=below:{$a$},label=left:{\footnotesize $A_0,Z_{0,1}$}] {};
\tikzset{label distance=-1mm};
\node (a1) at (1.5,1.2) [point, label=below:{\footnotesize $\rule{0pt}{8pt}\hspace*{1.4em}X_1^1,\!Z_{1,3}$}] {};
\draw[<-,thick] (a) -- (a1);
\node (a0) at (1.5,-1.2) [point, label=below:{\footnotesize $X_1^0,Z_{1,1}$}] {};
\draw[<-,thick] (a) -- (a0);
\node (a11) at (3.3,1.8) [point, label=above:{\footnotesize $X_2^1$}] {};
\draw[<-,thick] (a1) -- (a11);
\node (a10) at (3.3,0.6) [point, label=above:{\footnotesize $X_2^0$}] {};
\draw[<-,thick] (a1) -- (a10);
\node (a01) at (3.3,-0.6) [point, label=above:{\footnotesize $X_2^1$}] {};
\draw[<-,thick] (a0) -- (a01);
\node (a00) at (3.3,-1.8) [point, label=above:{\footnotesize $X_2^0$}] {};
\draw[<-,thick] (a0) -- (a00);
\node (a111) at (5.1,2.1) [point, label=right:{\footnotesize $X_3^1$}] {};
\draw[<-,thick] (a11) -- (a111);
\node (a110) at (5.1,1.5) [point, label=right:{\footnotesize $X_3^0,Z_{3,3}$}] {};
\draw[<-,thick] (a11) -- (a110);
\node (a101) at (5.1,0.9) [point, label=right:{\footnotesize $X_3^1$}] {};
\draw[<-,thick] (a10) -- (a101);
\node (a100) at (5.1,0.3) [point, label=right:{\footnotesize $X_3^0,Z_{3,3}$}] {};
\draw[<-,thick] (a10) -- (a100);
\node (a011) at (5.1,-0.3) [point, label=right:{\footnotesize $X_3^1$}] {};
\draw[<-,thick] (a01) -- (a011);
\node (a010) at (5.1,-0.9) [point, label=right:{\footnotesize $X_3^0,Z_{3,3}$}] {};
\draw[<-,thick] (a01) -- (a010);
\node (a001) at (5.1,-1.5) [point, label=right:{\footnotesize $X_3^1$}] {};
\draw[<-,thick] (a00) -- (a001);
\node (a000) at (5.1,-2.1) [point, label=right:{\footnotesize $X_3^0,Z_{3,3}$}] {};
\draw[<-,thick] (a00) -- (a000);

\node (c11) at  (-0.3,-1.2) [point, label=left:{\footnotesize $Z_{0,1}$}] {}; %
\draw[<-] (c11) -- (a0);

\node (c31) at  (-0.3,1.2) [point, label=left:{\footnotesize $Z_{0,3}$}] {}; %
\draw[<-] (c31) -- (a1);

%\node (c11) at (1.5,1.2) [point, label=above:{\footnotesize $C_{1,1}$}] {}; %
%\draw[<-] (c11) -- (a11);
%\node (c11z) at (-0.3,1.2) [point, label=above:{\footnotesize $C_{0,1}$}] {}; %
%\draw[<-] (c11z) -- (c11);

\foreach \y/\l/\a in {-2.1/00/below,-0.9/01/above,0.3/10/below,1.5/11/above}
{
\node (c000) at (3.3,\y) [point, label=below:{\footnotesize $Z_{2,3}$}] {}; %
\draw[<-] (c000) -- (a\l0);
\node (c000z) at (1.5,\y) [point,label=\a:{\footnotesize $Z_{1,3}$} ] {}; % ,
\draw[<-] (c000z) -- (c000);
\node (c000zz) at (-0.3,\y) [point, label=left:{\footnotesize $Z_{0,3}$}] {};
\draw[<-] (c000zz) -- (c000z);
}

\begin{scope}[yshift=-10mm]
\node at (-2.2,-3) {$\q_{f^n}$};
\node (y0) at (-0.3, -2.4) [point,label=above:{\footnotesize $y_0$},label=left:{\small $A_0$}] {};
\foreach \x/\n/\p in {1.5/1/0,3.3/2/1,5.1/3/2%
%,6.9/4/3
}
{
    \node (y\n) at (\x, -2.4) [point,label=above:{\footnotesize $y_\n$}] {};
    \draw[<-] (y\p) -- (y\n);
}
%\node (y4) at (5.1, -2.4) [point,label=right:{\small $A_3$},label=above:{\footnotesize $ $}] {};
%
\node (z01) at (3.3,-3) [point,label=above:{\footnotesize $z_{2,j}$}] {};
\draw[->,out=-90,in=0] (y3) to  (z01);
\foreach \j/\y in {2/-3.3,3/-3.6,4/-3.9,5/-4.2}
{
	\node (z0\j) at (3.3,\y) [point] {}; % ,label=below:{\footnotesize $z_{2,\j}$}
	\draw[->,out=-90,in=0] (y3) to (z0\j);
}
\foreach \x/\n/\p/\m in {1.5/1/0/1, -0.3/2/1/0}
{
    \node (z\n1) at (\x, -3) [point,label=above:{\footnotesize $z_{\m,j}$}] {};
    \draw[->] (z\p1) -- (z\n1);
   \foreach \j/\y in {2/-3.3,3/-3.6,4/-3.9,5/-4.2}
   {
    	\node (z\n\j) at (\x, \y) [point] {}; % ,label=below:{\footnotesize $z_{\m,2}$}
    	\draw[->] (z\p\j) -- (z\n\j);
    }
}
\node (z 41) at (-0.3, -3) [label=left:{\small $Z_{0,1}$}] {};
\foreach \j/\y in {2/-3.3,3/-3.6,4/-3.9,5/-4.2}
{
	\node (z 4\j) at (-0.3, \y) [label=left:{\small $Z_{0,\j}$}] {};
}
\end{scope}
\end{tikzpicture}
\end{center}
\caption{Canonical model $\mathcal{C}_{(\mathcal{T}_{f^n}, \mathcal{A}_{\vec{\alpha}})}$ and query $\q_{f^n}$ for a Boolean function $f^n$, $n = 1$, computed by the circuit with one input $x$, one advice input $y$ and a single $\land$-gate. Thus,  $N = 3$, $d = 5$ and $\phifn[](x,y,g) = \neg x \land g \land (x \lor \neg g)\land (y \lor \neg g) \land (\neg x\lor\neg y\lor g)$. Points in $X_i^\ell$ are also in $A_i$, for all $1 \leq i \leq N$; the arrows denote role $P$ and the $Z_{i,j}$ branches in the canonical model are shown only for $j = 1,3$, i.e., for $\neg x$ and $(x\lor\neg g)$.}
\label{gen-mod}
\end{figure}
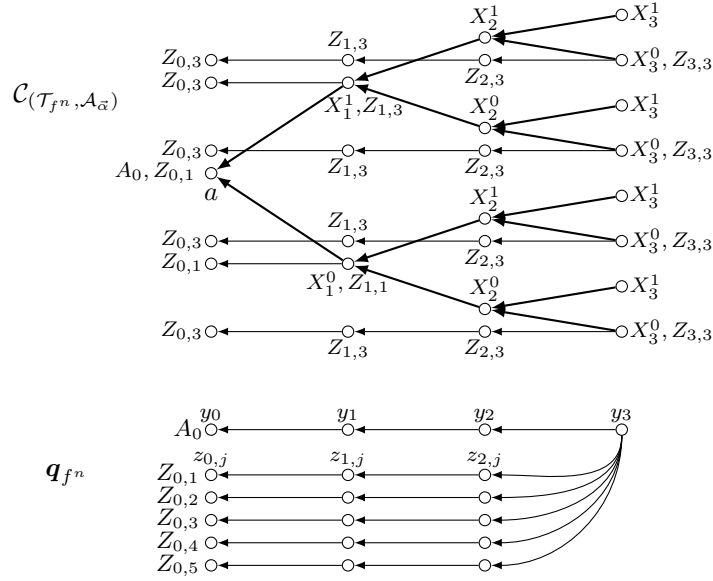

We explain the intuition behind the $\mathcal{T}_{f^n}$, $\q_{f^n}$ and $\mathcal{A}_{\vec{\alpha}}$ using the example of Fig.~\ref{gen-mod}, where the query $\q_{f^n}$  and the canonical model of $(\mathcal{T}_{f^n}, \mathcal{A}_{\vec{\alpha}})$, with $\mathcal{A}_{\vec{\alpha}} = \{A_0(a), Z_{0,1}(a) \}$,  are illustrated for some Boolean function. To answer $\q_{f^n}$ in the canonical model, we have to check whether $\q_{f^n}$ can be homomorphically mapped into it. The variables $y_i$ are clearly mapped to one of the branches of the canonical model from $a$ to a point in $A_3$, say the lowest one, which corresponds to the valuation for the variables in $\phifn$ making  all of them false. Now, there are two possible ways to map  variables $z_{2,1},z_{1,1},z_{0,1}$ that correspond to the clause $C_1 = \neg x_1$ in $\phifn[]$. If they are sent to the same branch so that $z_{0,1} \mapsto a$ then $Z_{0,1} (a)\in \mathcal{A}_{\vec{\alpha}}$, whence the clause $C_1$ cannot be in $\phifn$. Otherwise, they are mapped to the points in a side-branch so that $z_{0,1}\not\mapsto a$, in which case $\neg x_1$ must be true under our valuation. Thus, we arrive at the following:

\begin{lemma}\label{l1}
$(\mathcal{T}_{f^n},\mathcal{A}_{\vec{\alpha}}) \models \q_{f^n}$ iff $\phifn$ is satisfiable, for all $\vec{\alpha}\in\{0,1\}^n$.
\end{lemma}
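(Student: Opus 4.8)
The plan is to invoke Theorem~1 to reduce $(\mathcal{T}_{f^n},\mathcal{A}_{\vec{\alpha}})\models\q_{f^n}$ to the existence of a homomorphism $h$ from $\q_{f^n}$ into the canonical model $\mathcal{C}=\mathcal{C}_{(\mathcal{T}_{f^n},\mathcal{A}_{\vec{\alpha}})}$, and then to match such homomorphisms with the satisfying assignments of $\phifn$ by a careful inspection of the shape of $\mathcal{C}$. (One first notes that $(\mathcal{T}_{f^n},\mathcal{A}_{\vec{\alpha}})$ is consistent, so that Theorem~1 applies: the single individual $a$ carries only $A_0$ and some $Z_{0,j}$ with $j\le n$, and no $\bot$-inclusion of $\mathcal{T}_{f^n}$ is triggered by this.)

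The first real step is to describe $\mathcal{C}$ explicitly by unfolding the generating relation. It consists of: the root $a$, with $a\in A_0^{\mathcal{C}}$ and $a\in Z_{0,j}^{\mathcal{C}}$ precisely for the $j\le n$ with $\alpha_j=1$ (nothing else is forced on $a$); a complete binary tree of depth $N$ hanging off $a$, whose node $n_s$ reached along a bit string $s=(\ell_1,\dots,\ell_i)$ lies in $X_i^{\ell_i}\cap A_i$, is joined to its parent by a $P$-edge oriented from child to parent, and lies in $Z_{i,j}^{\mathcal{C}}$ iff the literal of $\phifn[]$ over $p_i$ occurring in $C_j$ (if $C_j$ has one) is made true by $p_i\mapsto\ell_i$; and, attached to each tree node $u$ at level $i$ with $u\in Z_{i,j}^{\mathcal{C}}$, a $P$-chain $u\to m_{i-1}\to\dots\to m_0$ of fresh witnesses with $m_{i'}\in Z_{i',j}^{\mathcal{C}}$, oriented away from $u$, whose bottom $m_0$ has no outgoing $P$-edge. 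The delicate point here is that the side condition in the definition of $\leadsto$ never suppresses a $Z$-chain: a tree node is created with generating role $P^-$, while the membership $Z_{i,j}$ is not inherited by its parent, so the witness for $\exists P.Z_{i-1,j}$ is genuinely generated. I expect establishing this structural description --- in particular the orientations of the $P$-edges, which are what the $A_0\sqcap\exists P\sqsubseteq\bot$ and $A_0\sqcap A_i\sqsubseteq\bot$ inclusions underpin --- to be the main obstacle; the two directions of the equivalence then follow readily from it.

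For $(\Leftarrow)$, take a satisfying assignment $v$ of $\phifn$, recalling that $\phifn$ is obtained from $\phifn[]$ by deleting the unit clauses $C_j=\neg p_j$ for exactly those $j\le n$ with $\alpha_j=1$. Let $h(y_i)$ be the level-$i$ node on the branch $s_v=(v(p_1),\dots,v(p_N))$. For a deleted clause $C_j$, set $h(z_{i',j})=h(y_{i'})$ for all $i'$, so $h(z_{0,j})=a\in Z_{0,j}^{\mathcal{C}}$ and the atoms for that clause fold onto the branch. For a surviving clause $C_j$, pick an $i^\ast$ and a literal over $p_{i^\ast}$ in $C_j$ that $v$ satisfies, so the level-$i^\ast$ node of $s_v$ lies in $Z_{i^\ast,j}^{\mathcal{C}}$; send $z_{i',j}$ up the branch to $h(y_{i'})$ for $i'\ge i^\ast$ and down the attached $Z_{\cdot,j}$-chain for $i'<i^\ast$. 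The path from $y_N$ through $z_{N-1,j},\dots,z_{0,j}$ then has length exactly $N$ and ends in $Z_{0,j}^{\mathcal{C}}$, while $A_0(y_0)$ holds since $h(y_0)=a$; thus $h$ is a homomorphism and Theorem~1 gives $(\mathcal{T}_{f^n},\mathcal{A}_{\vec{\alpha}})\models\q_{f^n}$.

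For $(\Rightarrow)$, let $h$ be a homomorphism. Since $a$ is the only element of $A_0^{\mathcal{C}}$ we get $h(y_0)=a$; and since $P^{\mathcal{C}}$ has no loops or $2$-cycles and the only $P$-predecessors of a tree node are its tree-children, a short induction on $i$ shows $h(y_i)$ is the level-$i$ node of one branch $s=(\ell_1,\dots,\ell_N)$; put $v=(\ell_1,\dots,\ell_N)$. Fix a surviving clause $C_j$: then $h(y_N),h(z_{N-1,j}),\dots,h(z_{0,j})$ is a $P$-walk of length $N$ from the leaf $h(y_N)$ ending at an element of $Z_{0,j}^{\mathcal{C}}$. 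From the description of $\mathcal{C}$, such a walk can only climb the branch towards $a$ or turn off a tree node into a $Z$-chain and then run straight down to the chain bottom; it cannot reach $a$ (as $a\notin Z_{0,j}^{\mathcal{C}}$ for a surviving $j$ and $a$ has no outgoing $P$-edge) and it cannot terminate on a $Z_{\cdot,j'}$-chain with $j'\ne j$. Hence it climbs to $n_{(\ell_1,\dots,\ell_{i^\ast})}$ for some $i^\ast\ge1$ and descends a $Z_{\cdot,j}$-chain, forcing $n_{(\ell_1,\dots,\ell_{i^\ast})}\in Z_{i^\ast,j}^{\mathcal{C}}$; by the description of that set, $v$ makes the literal over $p_{i^\ast}$ in $C_j$ true, so $v\models C_j$. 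Since $C_j$ was an arbitrary surviving clause, $v\models\phifn$, i.e., $\phifn$ is satisfiable.
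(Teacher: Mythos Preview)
Your proof is correct and follows essentially the same route as the paper's: both directions go through the canonical model and Theorem~1, extract a valuation from the image of the $y_i$-spine, and match each $z_{\cdot,j}$-path either to the spine (when $C_j$ is deleted, ending at $a\in Z_{0,j}$) or to a $Z_{\cdot,j}$-side-chain branching off at a level where the valuation satisfies a literal of $C_j$. Your write-up is simply more explicit about the structure of $\mathcal{C}_{(\mathcal{T}_{f^n},\mathcal{A}_{\vec{\alpha}})}$ than the paper, which largely delegates that to the figure and the informal discussion preceding the lemma.
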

\begin{proof}
%That (ii) is equivalent to (iii) follows from Lemma~\ref{l:F}.
%
$(\Rightarrow)$ Let $\mathfrak{a}$ be an assignment of points in the canonical model of $(\mathcal{T}_{f^n},\mathcal{A_{\vec{\alpha}}})$ to the variables of $\q_{f^n}$ under which it holds true. In particular, for all $0 \leq i \leq N$, $\mathfrak{a}(y_i)$ is in $A_i$, and thus the $\mathfrak{a}(y_i)$ define a vector $\gamma$ by taking $\gamma_i = 1$ if $\mathfrak{a}(y_i) \in X_i^1$ and $\gamma_i = 0$ otherwise, for all $1 \leq i \leq N$.
%Observe first that there is a sequence $u_0,\dots,u_N$ of points in $\mathcal{C}$ such that $u_0 = a$, $(u_i,u_{i-1}) \in P$, $u_i \in A_i$ and $\mathfrak{a}(y_i) = u_i$. 
We show $\phifn(\vec{\gamma}) = 1$.
%For each variable $p_i$ in $\phifn$, we set 
Take any clause $C_j$ in $\phifn[]$ and consider $\mathfrak{a} (z_{0,j}) \in Z_{0,j}$. If $\mathfrak a (z_{0,j}) = a$ then $j \leq n$, $Z_{0,j}(a) \in \mathcal{A}_{\vec{\alpha}}$ and $\alpha_j=1$; thus, the clause $x_j$ does not occur in $\phifn$. Otherwise, $\mathfrak a (z_{0,j}) \ne a$ and so, some $\mathfrak{a}(y_i)$ is in $Z_{i,j}$, which means that the clause $C_j$ contains $p_i$ if $\mathfrak{a}(y_i) \in X_i^1$ and $\neg p_i$ otherwise. By the definition of $\vec{\gamma}$, $\phifn(\vec{\gamma}) = 1$.

$(\Rightarrow)$ Suppose $\phifn(\vec{\gamma}) = 1$.
Recall that the canonical model of $(\mathcal{T}_{f^n},\mathcal{A}_{\vec{\alpha}})$ contains a 
path $u_0,\dots,u_N$ from $a = u_0$ to some $u_N$  that corresponds to that assignment in the following sense: for all $1 \leq i \leq N$, $u_i \in X_i^1$ if $\gamma_i = 1$ and $u_i \in X_i^0$ otherwise.
We construct an assignment $\mathfrak{a}$ of points in the canonical model of $(\mathcal{T}_{f^n},\mathcal{A}_{\vec{\alpha}})$ to the variables in $\q_{f^n}$ in accordance with this valuation. For $0 \leq i \le N$, we set $\mathfrak{a}(y_i) = u_i$. For $1 \leq j \leq m$, we define $\mathfrak{a}(z_{N-1,j}),\dots,\mathfrak{a}(z_{0,j})$ recursively, starting from $\mathfrak{a}(z_{N-1,j})$: set $\mathfrak{a}(z_{i,j}) = \mathfrak{a}(z_{i+1,j}) w_{[PZ_{i,j}]}$ if $\mathfrak{a}(z_{i+1,j})$ is in $Z_{i+1,j}$ and  $\mathfrak{a}(z_{i,j}) = u_i$, otherwise (assuming that $z_{N,j}=y_N$). It is easy to check that $\q_{f^n}$ is true in the canonical model under this assignment.
\end{proof}

%*******************************

\section{The Size of Rewritings}\label{s:6}

Now we show how PE-rewritings for $\q_{f^n}$ and $\T_{f^n}$ can be transformed into monotone Boolean formulas computing $f^n$, how FO-rewritings can be transformed into Boolean formulas and NDL-rewritings into monotone Boolean circuits.
%
%Though the next lemma seems to be only a minor step in our argument, we want to emphasis its importance.
%It is here we use certain assumptions on the signature $\Sigma$, which are different for different results,
%and it is this lemma is broken when $\Sigma$ contains two constants supposed to be different and
%built-in predicates like equality/inequality.
%

\begin{lemma}\label{l:4}
Let $f^1,f^2,\dots$ be a family of monotone Boolean functions in \NP, and let $f = f^n$, for some $n$.

{\rm (i)} If $\q'_f$ is a PE-rewriting for $\q_f$ and $\T_f$  then there is a monotone Boolean formula $\psi_f$ computing $f$ with $|\psi_f| \le |\q'_f|$.

{\rm (ii)} If $\q_f'$ is an $FO$-rewriting for $\q_f$ and $\T_f$ and the signature $\Sigma$ contains a single constant then there is a Boolean formula $\psi_f$ computing $f$ with $|\psi_f| \le |\q'_f|$.

{\rm (iii)} If $(\Pi_f,G)$ is an NDL-rewriting for $\q_f$ and $\T_f$ then there is a monotone Boolean circuit $\C_f$ computing $f$ with $|\C_f| \le |\Pi_f|$.
\end{lemma}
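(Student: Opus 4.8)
The plan is to prove each of the three parts by extracting, from a given rewriting $\q'_f$ (or $\Pi_f$), a Boolean formula (or circuit) over the variables $\vec x = (x_1,\dots,x_n)$ whose value on $\vec\alpha$ tracks whether $(\T_f,\mathcal A_{\vec\alpha})\models \q_f$, which by Lemma~\ref{l1} and Lemma~\ref{l:F} equals $f(\vec\alpha)$. The key observation is that the only part of $\mathcal A_{\vec\alpha}$ that depends on $\vec\alpha$ is the set of ground atoms $Z_{0,j}(a)$ for $1\le j\le n$, and $Z_{0,j}(a)\in\mathcal A_{\vec\alpha}$ iff $\alpha_j=1$. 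So I would evaluate the rewriting in $\mathcal I_{\mathcal A_{\vec\alpha}}$ symbolically: replace each atom of the rewriting by a Boolean value determined by $\mathcal A_{\vec\alpha}$. Since the single constant is $a$, every term in a PE-rewriting (which has no other constants available to instantiate existentials with, once we push the quantifiers down) must be mapped to $a$; hence every atom in the matrix is a ground atom over $a$, and it evaluates to $\mathbf{true}$ iff it lies in $\mathcal A_{\vec\alpha}$. The atoms $A_0(a)$ evaluate to $\mathbf{true}$ (constant), any atom not of the form $A_0(a)$ or $Z_{0,j}(a)$ with $j\le n$ evaluates to $\mathbf{false}$ (constant), and $Z_{0,j}(a)$ evaluates to the variable $x_j$. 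Substituting these Boolean constants/variables into the $\land$/$\lor$ structure of the PE-rewriting and collapsing the (now vacuous, since all terms are $a$) existential quantifiers yields a monotone Boolean formula $\psi_f$ of size at most $|\q'_f|$ that equals $f$; this gives (i).

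For (ii) the argument is the same except that the FO-rewriting may use $\neg$, $\forall$, and $=$. Again, the only constant available is $a$, so after converting to an equivalent formula all quantified variables range, for the purpose of evaluation in $\mathcal I_{\mathcal A_{\vec\alpha}}$, only over $a$ — strictly, I would note that $\mathcal I_{\mathcal A_{\vec\alpha}}$ has as its domain the set of \emph{all} individuals of $\Sigma$, but with $\Sigma$ containing just one constant this domain is the singleton $\{a\}$, so $\forall x\,\theta$ and $\exists x\,\theta$ both reduce to $\theta[x/a]$, and $t_1=t_2$ is always $\mathbf{true}$. Thus the FO-rewriting collapses to a variable-free Boolean combination of the ground atoms over $a$, which as above become Boolean constants or the variables $x_j$; keeping $\neg$ gives a (non-monotone) Boolean formula $\psi_f$ computing $f$ with $|\psi_f|\le|\q'_f|$.

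For (iii), I would evaluate the NDL program $(\Pi_f,G)$ on $\mathcal A_{\vec\alpha}$ symbolically. Since $(\Pi_f,G)$ is pure, no clause head contains a constant; because $\q_f$ has no answer variables, $G$ is nullary. As $\Pi_f$ is nonrecursive, order its intensional predicates by the (acyclic) dependency relation. Again the ground ABox facts over the single constant $a$ are: $A_0(a)$ — always present; $Z_{0,j}(a)$ for $j\le n$ — present iff $\alpha_j=1$; and nothing else. By induction up the dependency order, for each intensional atom $R(\vec a\,)$ (with $\vec a$ a tuple of $a$'s, the only constant) I associate a monotone Boolean circuit node $\nu_{R(\vec a)}$: the body of each clause, once all variables are instantiated to $a$ (the only choice), becomes a conjunction of already-computed nodes and ABox-atom values, and $R(\vec a\,)$ is derivable iff the disjunction over all applicable clause instances of these conjunctions holds. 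Collecting one $\land$-node per clause instance and one $\lor$-node per head atom, one checks the total number of nodes is bounded by the number of symbols in $\Pi_f$, giving $|\C_f|\le|\Pi_f|$; the output node is $\nu_G$, and $\C_f(\vec\alpha)=1$ iff $\Pi_f,\mathcal A_{\vec\alpha}\models G$ iff $f(\vec\alpha)=1$.

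The main obstacle — and the point that needs care rather than cleverness — is making the ``only the single constant $a$ matters'' step fully rigorous: one must check that in $\mathcal I_{\mathcal A_{\vec\alpha}}$ with $\Sigma$ a single-constant signature, all terms in the rewriting (including those bound by quantifiers in the FO/PE case and all variable instantiations in the Datalog case) can only take the value $a$, so that the homomorphic/satisfaction check degenerates to pure Boolean evaluation of ground atoms over $a$; and that the ``non-signature'' atoms truly evaluate to $\mathbf{false}$ because they are absent from $\mathcal A_{\vec\alpha}$. For (iii) there is the extra bookkeeping of converting a possibly multi-clause, multi-atom-body nonrecursive program into a circuit whose size is genuinely $\le|\Pi_f|$ rather than merely polynomial in it; the right accounting is to charge each body atom occurrence and each clause to a distinct symbol of $\Pi_f$. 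The size bounds $|\psi_f|\le|\q'_f|$ and $|\C_f|\le|\Pi_f|$ then follow by this term-by-term / clause-by-clause correspondence, using correctness of the rewriting together with Lemmas~\ref{l:F} and~\ref{l1}.
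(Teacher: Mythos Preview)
Your approach matches the paper's: evaluate the rewriting over $\mathcal I_{\mathcal A_{\vec\alpha}}$, substitute $\top$ for $A_0(a)$, the Boolean variable $x_j$ for $Z_{0,j}(a)$, and $\bot$ for every other ground atom, then read off a formula or circuit. The one real gap is that you assume throughout that $\Sigma$ contains a single constant, whereas the lemma imposes this restriction \emph{only} in part~(ii). In (i) and (iii) the signature may contain further constants; the rewriting may mention them, and the domain of $\mathcal I_{\mathcal A_{\vec\alpha}}$ consists of \emph{all} individuals in $\Sigma$. So your step ``every term must be $a$'' is not justified as stated for (i) and (iii).

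For (i) the repair is easy and is what the paper does: in $\mathcal I_{\mathcal A_{\vec\alpha}}$ no predicate is true at any element other than $a$, so for a \emph{positive} formula, sending every variable to $a$ can only raise the truth value of each atom and hence of the whole formula; thus $\exists\vec y\,\varphi$ is equivalent in this model to $\varphi[a/\vec y]$, and any atom containing a constant $\ne a$ collapses to $\bot$. For (iii) positivity alone is not enough, because bodies of clauses may contain non-$a$ constants together with intensional predicates. Here one must actually \emph{use} the purity you mention: since clause heads contain no constants and every head variable occurs in the body, an induction on derivation length shows that $\Pi_f,\mathcal A_{\vec\alpha}\not\models Q(t_1,\dots,t_l)$ whenever some $t_i\ne a$. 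Only after establishing this can you safely instantiate every variable to $a$, treat all remaining body atoms with non-$a$ arguments as $\bot$, and build the monotone circuit over the single ground instance $R(a,\dots,a)$ of each intensional predicate. (Your aside about $=$ in (ii) is harmless but extraneous: the paper's FO-rewritings over $\Sigma$ do not include $=$.)
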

\begin{proof}
{\rm (i)}
By Lemmas~\ref{l:F} and \ref{l1}, for any PE-rewriting  $\q'_{f}$ for $\q_{f}$ and $\mathcal{T}_{f}$, we have
\begin{equation*}
\mathcal{I}_{\mathcal{A}_{\vec{\alpha}}} \models \q'_{f} \qquad \text{iff} \qquad f(\vec{\alpha}) = 1, \quad \text{ for all }\vec{\alpha}\in\{0,1\}^n.
\end{equation*}
Recall that, of all ground atoms in signature $\Sigma$, only $A_0(a)$ and the $Z_{0,j}(a)$, for $1 \le j \le n$, can be true in $\mathcal{I}_{\mathcal{A}_{\vec{\alpha}}}$. In particular, no predicate can be true in $\mathcal{I}_{\mathcal{A}_{\vec{\alpha}}}$ on an element different from $a$. So, we can replace all the individual variables in $\q'_{f}$ with $a$, remove all (existential)  quantifiers and replace $A_0(a)$ by $\top$ and all the atoms different from $A_0(a)$ and $Z_{0,j}(a)$, for $1 \leq j\leq n$, by $\bot$ without affecting the truth-value of $\q'_f$ in $\mathcal{I}_\mathcal{A}$. Denote the resulting  PE-query by $\q_{f}^\dag$. It does not contain any variables and we have
$\mathcal{I}_{\mathcal{A}_{\vec{\alpha}}} \models \q_{f}^\dag$ iff $f(\vec{\alpha}) = 1$.
The formula $\q_{f}^\dag$ is equivalent to a propositional formula, $\psi_f$, with the connectives $\land$, $\lor$ and the propositional variables $Z_{0,j}(a)$, for $1 \leq j \leq n$, such that $\mathcal{I}_{\mathcal{A}_{\vec{\alpha}}} \models Z_{0,j}(a)$ iff $\alpha_j =  1$. Thus, $\psi_f$ computes $f$ and, clearly, $|\psi_f| \le |\q'_{f}|$.

{\rm (ii)}
If, in addition,  $\Sigma$ contains only one constant, $a$, then in the same way we can convert any FO-rewriting $\q'_{f}$ for $\q_{f}$ and $\mathcal{T}_{f}$ --- even with $\forall$ and $\neg$ --- to a propositional formula with variables $Z_{0,j}(a)$, for $1\le j \leq n$, which computes $f$.

{\rm (iii)}
Suppose now that $(\Pi_f, G)$ is an NDL-rewriting for $\q_{f}$ and $\mathcal{T}_{f}$ over a given signature $\Sigma$, containing $a$ among its constants. Then, for any ground $\Sigma$-atom $Q(t_1,\dots,t_l)$ with at least one $t_i$ different from $a$, we have $\Pi_f, \mathcal{A}_{\vec{\alpha}} \not\models Q(t_1,\dots,t_l)$ (which can be easily proved by induction of the length of derivations using the fact that $\Pi_f$ is pure and each variable that occurs in the head of a clause must also occur in its body). So we can again replace all the individual variables in $\Pi_f$ with $a$, $A_0(a)$ with $\top$ and all the atoms that do not occur in the head of a clause and different from $A_0(a)$ and $Z_{0,j}(a)$, for $1 \leq j \leq n$, with $\bot$.
Denote the resulting propositional NDL-program by $\Pi_f^\dag$. Then
$\Pi^\dag_f,\mathcal{A}_{\vec{\alpha}} \models G$ iff $f(\vec{\alpha}) = 1$.
The program $\Pi^\dag_f$ can now be transformed into a monotone Boolean circuit $\C_{f}$ computing $f$: for every (propositional) variable $p$ occurring in the head of a clause in $\Pi^\dag_f$, we introduce an $\lor$-gate whose output is $p$ and inputs are the bodies of the clauses with head $p$; and for each such body, we introduce an $\land$-gate whose inputs are the propositional variables in the body. The resulting  monotone Boolean circuit with inputs $Z_{0,j}(a)$, for $1 \le j \leq n$, and output $G$ is denoted by $\C_{f}$. Clearly, $|\C_{f}| \le |\Pi_f|$.
\end{proof}

\begin{lemma}\label{l:2}
Let $f^1,f^2,\dots$ be a family of monotone Boolean functions in \NP, and let $f = f^n$, for some $n$.
The following holds for signatures with a single constant\textup{:}

{\rm (i)} Suppose $\q'$ is an FO-sentence such that $(\T_f,\A_{\vec{\alpha}}) \models \q_f$ iff $\I_{\A_{\vec{\alpha}}} \models \q'$, for any $\vec{\alpha}$. Then\footnote{Here and below, $B(x)$ denotes $\exists y\,P(x,y)$ in the case of $B = \exists P$.}
$$
\q'' ~=~  \exists x\, \Bigl[A_0(x) \land \bigl(\q' \lor \bigvee_{A_0 \sqcap B \sqsubseteq_{\T_f} \bot}\hspace*{-1em} B(x)\bigr)\Bigr]
$$
is an FO-rewriting for $\q_f$ and $\T_f$ with $|\q''| = |\q'| + O(|\C^n|^2)$.

{\rm (ii)} Suppose $(\Pi,G)$ is a pure NDL-query with a propositional goal $G$ such that, $(\T_f,\A_{\vec{\alpha}}) \models \q_f$ iff $\Pi, \A_{\vec{\alpha}} \models G$, for any $\vec{\alpha}$. Then $(\Pi',G')$ is an NDL-rewriting for $\q_f$ and $\T_f$ with $|\Pi'| = |\Pi| + O(|\C^n|^2)$, where $G'$ is a fresh propositional variable and $\Pi'$ is obtained by extending $\Pi$ with the following  clauses\textup{:}
\begin{itemize}
\item[--] $\forall x \, (A_0(x) \land G \to G')$, %for fresh propositional variables $A$ and $G'$,
\item[--] $\forall x\, (A_0(x) \land B(x) \to G')$, for all concepts $B$ such that $A_0 \sqcap B \sqsubseteq_{\T_f} \bot$.
\end{itemize}
\end{lemma}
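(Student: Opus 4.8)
The plan is to prove Lemma~\ref{l:2} by showing that the modified queries $\q''$ and $(\Pi',G')$ "repair" the behaviour of $\q'$ and $(\Pi,G)$ on ABoxes that are \emph{not} of the form $\A_{\vec\alpha}$, while leaving their behaviour on the $\A_{\vec\alpha}$ essentially unchanged. The point is that $\q'$ and $(\Pi,G)$ are only promised to be correct on the very restricted family of ABoxes $\A_{\vec\alpha}$, whereas an FO- or NDL-rewriting must be correct on \emph{every} ABox over $\Sigma$; the added disjuncts/clauses involving $A_0\sqcap B\sqsubseteq_{\T_f}\bot$ detect exactly the situations that make $(\T_f,\A)$ inconsistent, and the guard $A_0(x)$ isolates the single relevant individual.

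First I would analyse arbitrary ABoxes $\A$ over $\Sigma$. Since $\Sigma$ has a single constant $a$, every ground atom in $\A$ is of the form $A_0(a)$, $A_i(a)$, $P(a,a)$, $X_i^\ell(a)$ or $Z_{i,j}(a)$. I would split into cases. If $A_0(a)\notin\A$, then there is no individual in $\ind(\A)$ at all unless some other atom mentions $a$; but the query $\q_f$ has no answer variables and is only entailed when its (Boolean) pattern maps into the canonical model, which requires an individual satisfying $A_0$ (the root of the tree in Fig.~\ref{gen-mod}); one checks $(\T_f,\A)\not\models\q_f$, and correspondingly $\I_\A\not\models\q''$ because the outer $\exists x\,[A_0(x)\land\dots]$ fails. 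If $A_0(a)\in\A$ but $\A$ contains some atom $B(a)$ with $A_0\sqcap B\sqsubseteq_{\T_f}\bot$ (e.g.\ $A_i(a)$, or $P(a,a)$, or a "wrong" $Z_{i,j}(a)$), then $(\T_f,\A)$ is inconsistent, so $(\T_f,\A)\models\q_f$ holds vacuously, and correspondingly $\I_\A\models\q''$ because the disjunct $B(a)$ fires. In the remaining case, $A_0(a)\in\A$, no inconsistency-triggering atom is present, so the only atoms on $a$ besides $A_0(a)$ are $Z_{0,j}(a)$ for $1\le j\le n$ (the other $Z_{0,j}(a)$, $Z_{i,j}(a)$ with $i>0$, $A_i(a)$, $X_i^\ell(a)$ and $P(a,a)$ are all excluded by the $A_0\sqcap\cdot\sqsubseteq\bot$ axioms); hence $\A=\A_{\vec\alpha}$ for the $\vec\alpha$ determined by which $Z_{0,j}(a)$ are present, and then by hypothesis $(\T_f,\A)\models\q_f$ iff $\I_\A\models\q'$ iff $\I_\A\models\q''$ (since the extra disjuncts are false in this consistent case and the $A_0(x)$ guard is satisfied by $x\mapsto a$). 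This establishes (i); the size bound is immediate since there are $O(|\C^n|)$ concepts $B$ with $A_0\sqcap B\sqsubseteq_{\T_f}\bot$ (the $A_i$, $\exists P$, and the excluded $Z_{i,j}$, numbering $O(|\C^n|^2)$ in total), each contributing $O(1)$ symbols, and each translation $B(x)$ has constant size.

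Part (ii) is the NDL analogue and follows the same case split, using the observation already made in the proof of Lemma~\ref{l:4}(iii): since $\Pi$ is pure and $\Sigma$ has one constant, no ground atom on a term other than $a$ is ever derived, so $\Pi$ "sees" only the atoms on $a$. In the consistent case $\A=\A_{\vec\alpha}$ one has $\Pi,\A\models G$ iff $(\T_f,\A)\models\q_f$ by hypothesis, and then $G'$ is derived via the clause $A_0(x)\land G\to G'$ (with $x\mapsto a$); in the inconsistent case some $B(a)$ with $A_0\sqcap B\sqsubseteq_{\T_f}\bot$ holds, firing $A_0(x)\land B(x)\to G'$, matching the vacuous entailment $(\T_f,\A)\models\q_f$; and if $A_0(a)\notin\A$ no clause for $G'$ applies and $(\T_f,\A)\not\models\q_f$. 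One must check that adding these clauses keeps $\Pi'$ pure (the new heads are the propositional $G'$, containing no constants) and nonrecursive ($G'$ is fresh and occurs in no body). The size bound is as before.

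The main obstacle is the careful case analysis of arbitrary $\Sigma$-ABoxes, and in particular verifying that when $A_0(a)\in\A$ and no $A_0\sqcap B\sqsubseteq_{\T_f}\bot$ atom is present then $\A$ is \emph{exactly} some $\A_{\vec\alpha}$ — this is where one must invoke every disjointness axiom of $\T_f$ (including the ones $A_0\sqcap Z_{i,j}\sqsubseteq\bot$ for $(i,j)\notin\{(0,1),\dots,(0,n)\}$) to rule out all stray atoms on $a$, and must confirm that in the inconsistent cases $(\T_f,\A)\models\q_f$ indeed holds vacuously (which is standard for \OWLQL{} once one ABox atom contradicts a disjointness constraint). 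The rest is bookkeeping: checking the equivalences disjunct-by-disjunct and clause-by-clause, and the purely syntactic size and nonrecursiveness claims.
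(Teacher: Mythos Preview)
Your proposal is correct and follows essentially the same three-case analysis as the paper's proof: $A_0(a)\notin\A$; $A_0(a)\in\A$ together with an atom triggering a disjointness axiom; and the remaining case $\A=\A_{\vec\alpha}$. Your write-up is considerably more detailed than the paper's (which dispatches each case in a single sentence), and you additionally make explicit the purity and nonrecursiveness checks for $\Pi'$ and the observation that the $X_i^\ell$ are ruled out via $X_i^\ell\sqsubseteq A_i$ and $A_0\sqcap A_i\sqsubseteq\bot$; one minor slip is that you first write ``$O(|\C^n|)$ concepts $B$'' and then correct it to $O(|\C^n|^2)$ in the same sentence---the latter is what is needed.
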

\begin{proof}
(i) The queries $\q'$ and $\q''$ give the same answer over any $\A_{\vec{\alpha}}$. Consider a different ABox $\A'$ in the signature of $\T_f$ with $\ind(\mathcal{A}') = \{a\}$.
If $A_0(a) \notin \A'$ then we clearly have both $(\T_f,\A') \not\models \q_f$ and $\I_{\A'} \not\models \q''$.
If $\A'$ contains $A_0(a)$ and any ground atom in the signature of $\T_f$ different from $A_0(a), Z_{0,1}(a), \dots, Z_{0,n}(a)$ then $(\T_f,\A')$ is inconsistent, and so $(\T_f,\A') \models \q_f$. On the other hand, we clearly have $\I_{\A'} \models \q''$.

(ii) is proved in the same way.
The programs $(\Pi, G)$ and $(\Pi', G')$ give the same answer over any $\A_{\vec{\alpha}}$. Consider a different ABox $\A'$ in the signature of $\T_f$ with $\ind(\mathcal{A}') = \{a\}$.
If $A_0(a) \notin \A'$ then we clearly have both $(\T_f,\A') \not\models \q_f$ and
$\Pi', \A' \not\models G'$.
If $\A'$ contains $A_0(a)$ and any ground atom in the signature of $\T_f$ that is different from $A_0(a), Z_{0,1}(a), \dots, Z_{0,n}(a)$ then $(\T_f,\A')$ is inconsistent, and so $(\T_f,\A') \models \q_f$. On the other hand, we clearly have $\Pi', \A'   \models G'$.
\end{proof}

\begin{remark}\rm
It is worth noting that the lemma above can be extended to an \emph{arbitrary} signature (that is, to ABoxes with arbitrarily many individuals) provided that equality is available in rewritings. We refer to FO-rewritings with $=$ as FO$^=$-rewritings.

{\it
{\rm (i$'$)} Suppose that $\q'$ is an FO-sentence such that $(\T_f,\A_{\vec{\alpha}}) \models \q_f$ iff $\I_{\A_{\vec{\alpha}}} \models \q'$, for any $\vec{\alpha}$. Then there is an FO$^=$-rewriting $\q''$ for $\q_f$ and $\T_f$ such that $|\q''| \leq |\q'| + p(|\C^n|)$, for some polynomial $p$.

{\rm (ii$'$)} Suppose that $(\Pi,G)$ is a pure NDL-query with a propositional goal $G$ such that $(\T_f,\A_{\vec{\alpha}}) \models \q_f$ iff $\Pi, \A_{\vec{\alpha}} \models G$, for any $\vec{\alpha}$. Then there is an NDL-rewriting $(\Pi',G')$ for $\q_f$ and $\T_f$ such that $|\Pi'| \leq |\Pi| + p(|\C^n|)$, for some polynomial $p$.

}
\smallskip

\rm
The proof uses the polynomial `impure' PE- and NDL-rewritings of Section~\ref{s:equality} and~\cite{GottlobS11}.
To show (i$'$), let $\gamma$ be the PE-rewriting for $\q_f$ and $\T_f$ to be given in Section~\ref{s:equality}. We assume that this rewriting uses only two constants, say $0$ and $1$. Now, given an FO-sentence $\q'$ that is evaluated over ABoxes with a single individual only, we can clearly construct a quantifier-free FO-formula $\q_0(x)$ in the signature of $\q'$ such that it contains no constants and $\I_\A\models \q_0(a)$ iff $\I_\A\models \q'$, for all ABoxes with a single individual $a$.  Consider now the following FO-sentence
\begin{equation*}
\q'' ~=~  \exists x\, \Bigl[A_0(x) \land \bigl(\q_0(x) \ \ \lor \hspace*{-1em}\bigvee_{A_0 \sqcap B \sqsubseteq_{\T_f} \bot}\hspace*{-1em} B(x) \ \ \lor \ \ \exists y\,\bigr(P(y,x) \land \gamma[0/x, 1/y] \bigr)\Bigr],
\end{equation*}
where $\gamma[0/x, 1/y]$ is the result of replacing each occurrence of $0$ in $\gamma$ with $x$ and each occurrence of $1$ with $y$.

Suppose $(\T_f,\A)\models \q_f$. Then either $(\T_f,\A)$ is inconsistent or $\A$ has an individual $a_0$ such that $(\T_f,\A)\models \q_f(a_0)$, where $\q_f(a_0)$ is the query $\q_f$ with $y_0$ replaced by $a_0$.
In the former case, by the second disjunct, we have $\I_\A\models \q''$, which is a correct positive answer. In the latter case, if there is a distinct $a_1$ with $P(a_1,a_0)$ in $\A$ then the rewriting $\gamma$ provides the correct positive answer and, by the third disjunct, $\I_\A\models \q''$.
Finally, if neither of the above cases is applicable to $a_0$ then $\A_{a_0} = \{ D(a_0) \mid D(a_0)\in\A, D \text{ is a concept name} \}$ coincides with $\A_{\vec{\alpha}}$, for some $\vec{\alpha}$, in which case the correct positive answer is given by $\q_0(a_0)$.

Conversely, suppose $(\T_f,\A)\not\models \q_f$. Then $(\T_f,\A)$ is consistent, and so, the second disjunct is false. If there is no $a_0$ with $A_0(a_0)\in \A$ then, clearly, $\I_\A\not\models \q''$.  So, take an arbitrary individual  $a_0$ such that $A_0(a_0)\in \A$. If $P(a_1,a_0)\in \A$, for some $a_1$ (distinct from $a_0$ due to consistency) then, on the one hand, we have $\I_\A\not\models \gamma[0/a_0,1/a_1]$ and so, the third disjunct is false. On the other hand, if $\A_{a_0} = \{ D(a_0) \mid D(a_0)\in\A, D \text{ is a concept name} \}$ coincides with some $\A_{\vec{\alpha}}$ then $(\T_f,\A_{a_0})\models\q_f$ iff $\I_{\A_{a_0}}\models \q'$ iff $\I_{\A_{a_0}}\models \q_0(a_0)$. It follows that $\I_\A\not\models \q_0(a)$, for all individuals $a$ with $A_0(a)\in \A$, and so, the first disjunct is false as well.

\smallskip

Claim (ii$'$) is proved in a similar way, using a modification of the polynomial-size NDL-rewriting of Gottlob and Schwentick~\cite{GottlobS11}. (We note that in the short NDL-rewriting of~\cite{GottlobS11} the inequality predicate $\neq$ is applied only to terms that range over the extra constants, and not ABox individuals, and therefore one can write a short program defining $\neq$ by listing all pairs of non-equal constants.)  Let NDL-query $(\Delta,Q(z_0,z_1))$ be the short impure rewriting for $\q_f$ and $\T_f$, which uses $z_0$ and $z_1$ for the constants $0$ and $1$. Next, given an NDL-query $(\Pi,G)$ that is evaluated over ABoxes with a single individual only, we can construct a new NDL-query $(\Pi_0,G_0(x))$ such that all predicates of $\Pi_0$ are unary, all clauses have a single variable and $\Pi_0,\A\models G_0(a)$ iff $\Pi,\A\models G$, for all ABoxes with a single individual $a$.  Consider now $(\Pi',G')$, where $G'$ is a fresh propositional variable and $\Pi'$ consists of $\Pi_0$, $\Delta$ and the following three clauses:
\begin{itemize}
\item[--] $\forall x\,(A_0(x) \land B(x) \to G')$, for all concepts $B$ with $A_0 \sqcap B \sqsubseteq_{\T_f} \bot$,

\item[--] $\forall x\,(A_0(x) \land G_0(x) \to G')$,

\item[--] $\forall x,y\, (A_0(x)\land P(y,x)\land Q(x,y) \to G')$.
\end{itemize}
Suppose $(\T_f,\A)\models \q_f$. Then either $(\T_f,\A)$ is inconsistent or $\A$ has an individual $a_0$ such that $(\T_f,\A)\models \q_f(a_0)$, where $\q_f(a_0)$ is the query $\q_f$ with $y_0$ replaced by $a_0$.
In the former case, by the first clause, we have $\Pi',\A\models G'$, which is a correct positive answer. In the latter case, if there is a distinct $a_1$ with $P(a_1,a_0)$ in $\A$ then the program $\Delta$ provides the correct positive answer and, by the third clause, $\Pi',\A\models G'$.
Finally, if neither of the above cases is applicable to $a_0$ then $\A_{a_0} = \{ D(a_0) \mid D(a_0)\in\A, D \text{ is a concept name} \}$ coincides with some $\A_{\vec{\alpha}}$, in which case the correct positive answer is given by $\Pi_0$.

Conversely, suppose $(\T_f,\A)\not\models \q_f$. Then $(\T_f,\A)$ is consistent, and so, the first clause is not applicable. If there is no $a_0$ with $A(a_0)\in \A$ then, clearly, $\Pi',\A\not\models G'$.  So, take an arbitrary individual  $a_0$ such that $A(a_0)\in \A$. If $P(a_1,a_0)\in \A$, for some $a_1$ (distinct from $a_0$ due to consistency) then, on the one hand,  $\Delta,\A\not\models Q(a_0,a_1)$ and so, the third clause cannot give a positive answer. On the other hand, if $\A_{a_0} = \{ D(a_0) \mid D(a_0)\in\A, D \text{ is a concept name} \}$ coincides with some $\A_{\vec{\alpha}}$ then  $(\T_f,\A_{a_0})\models\q_f$ iff $\Pi,\A_{a_0}\models G$ iff $\Pi_0,\A_{a_0}\models G_0(a_0)$. It follows that $\Pi_0,\A\not\models G_0(a)$, for all individuals $a$ with $A_0(a)\in \A$, and so, the second clause cannot give a positive answer as well.
\end{remark}

We are in a position now to prove our main theorem which connects the size of circuits computing monotone Boolean functions with the size of rewritings for the corresponding queries and ontologies.

%As a consequence of Lemmas~\ref{l:2} and \ref{l1}, we obtain the following conversion of the lemma above:
%
%\begin{lemma}\label{l:5}
%Suppose that $\Sigma$ contains a single constant and $f$ is a monotone function in $\coNP$.
%
%{\rm (i)} If $f$ is computed by a monotone circuit $\C$, then there exists a pure non-recursive Datalog rewriting $(\Pi, G)$ for $\q_f$ and $\T_f$ such that $|\Pi| = O(p(|\C|))$, for some polynomial $p$.
%
%{\rm (ii)} If $f$ is computed by a Boolean formula $\psi$, then there exists an FO-rewriting $\q'$ for $\q_f$ and $\T_f$ such that
%$|\q'| = O(p(|\psi|))$, for some polynomial $p$.
%\end{lemma}
%%
%\begin{proof}
%TBD
%\end{proof}

\begin{theorem}\label{thm.main}
For any family $f^1,f^2,\dots$ of monotone Boolean functions in \NP,  there exist polynomial-size CQs $\q_n$  and \OWLQL{} TBoxes $\T_n$ such that the following holds\textup{:}
\begin{description}
\item[\normalfont (1)] Let $L(n)$ be a lower bound for the size of monotone Boolean formulas computing $f^n$. Then,  $|\q'_n| \ge L(n)$, for any PE-rewriting $\q'_n$ for  $\q_n$ and $\T_n$.

\item[\normalfont (2)] Let $L(n)$ and $U(n)$ be a lower and an upper bound for the size of monotone Boolean circuits computing $f^n$. Then
\begin{itemize}
\item[--] $|\Pi_n| \ge L(n)$, for any NDL-rewriting $(\Pi_n,G)$ for $\q_n$ and $\T_n$\textup{;}

\item[--] there exist a polynomial $p$ and an NDL-rewriting $(\Pi_n,G)$ for $\q_n$ and $\T_n$  over any suitable signature with a single constant such that $|\Pi_n|\le U(n) + p(n)$.
\end{itemize}

\item[\normalfont (3)] Let $L(n)$ and $U(n)$ be a lower and an upper bound for the size of Boolean formulas computing $f^n$. Then
\begin{itemize}
\item[--] $|\q'_n| \ge L(n)$, for any FO-rewriting $\q'_n$ for $\q_n$  and $\T_n$ over any suitable signature with a single constant\textup{;}

\item[--] there exist a polynomial $p$ and an FO-rewriting $\q'_n$ for $\q_n$ and $\T_n$ over any suitable signature with a single constant with $|\q'_n| \le U(n) + p(n)$.
\end{itemize}
\end{description}
\end{theorem}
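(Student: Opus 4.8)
The plan is to read off Theorem~\ref{thm.main} from the reduction of Section~\ref{sec:queries} together with Lemmas~\ref{l:4} and~\ref{l:2}, taking $\q_n := \q_{f^n}$ and $\T_n := \T_{f^n}$. First I would record that these have polynomial size: since $f^1,f^2,\dots$ is in \NP, the witnessing circuits $\C^n$ satisfy $|\C^n| \le T(n)$ for a fixed polynomial $T$, and by construction $|\q_{f^n}| = O(|\C^n|^2)$ and $|\T_{f^n}| = O(|\C^n|^2)$; in particular the quantity $O(|\C^n|^2)$ appearing in Lemma~\ref{l:2} is bounded by some polynomial $p(n)$. By Lemmas~\ref{l:F} and~\ref{l1}, $(\T_n,\A_{\vec\alpha}) \models \q_n$ iff $f^n(\vec\alpha) = 1$ for every $\vec\alpha \in \{0,1\}^n$, and this equivalence is the bridge used throughout.

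For the lower bounds --- claim~(1) and the first bullets of~(2) and~(3) --- I would simply invoke the relevant part of Lemma~\ref{l:4}. Any PE-rewriting $\q'_n$ for $\q_n,\T_n$ yields, by Lemma~\ref{l:4}(i), a monotone Boolean formula computing $f^n$ of size at most $|\q'_n|$; hence $|\q'_n| \ge L(n)$ for every lower bound $L(n)$ on monotone-formula size for $f^n$, with no restriction on the signature. Likewise Lemma~\ref{l:4}(iii) turns an NDL-rewriting $(\Pi_n,G)$ into a monotone circuit computing $f^n$ of size $\le |\Pi_n|$, so $|\Pi_n| \ge L(n)$; and Lemma~\ref{l:4}(ii), where the single-constant hypothesis is used, turns an FO-rewriting $\q'_n$ over a single-constant signature into a (possibly non-monotone) Boolean formula of size $\le |\q'_n|$, so $|\q'_n| \ge L(n)$. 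The only care needed here is to attach the single-constant hypothesis to exactly the statement --- the FO one in~(3) --- that inherits it from Lemma~\ref{l:4}(ii).

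For the upper bounds --- the second bullets of~(2) and~(3) --- I would go in the opposite direction and then apply Lemma~\ref{l:2}. Given a monotone circuit $\C$ with $|\C| \le U(n)$ computing $f^n$, I would build a pure propositional NDL-query $(\Pi,G)$: introduce one propositional variable per gate, emit one Horn clause for each $\land$-gate and two for each $\lor$-gate, let the circuit inputs be the ground atoms $Z_{0,j}(a)$, and take $G$ to be the output-gate variable; then $\Pi,\A_{\vec\alpha} \models G$ iff $\C(\vec\alpha) = 1$ iff $(\T_n,\A_{\vec\alpha}) \models \q_n$, the program is pure (the atoms $Z_{0,j}(a)$ are sources, hence never heads), its size is linear in $U(n)$, and Lemma~\ref{l:2}(ii) patches it into a genuine NDL-rewriting $(\Pi',G')$ over any suitable single-constant signature with $|\Pi'| = |\Pi| + O(|\C^n|^2) \le U(n) + p(n)$. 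For the FO case I would instead take a Boolean formula $\psi$ with $|\psi| \le U(n)$ computing $f^n$, substitute the ground atom $Z_{0,j}(a)$ for each occurrence of the $j$-th input variable (this uses the single constant $a$) to obtain a quantifier-free FO-sentence $\q'$ with $\I_{\A_{\vec\alpha}} \models \q'$ iff $\psi(\vec\alpha) = 1$ iff $(\T_n,\A_{\vec\alpha}) \models \q_n$, and apply Lemma~\ref{l:2}(i) to get an FO-rewriting $\q''$ with $|\q''| = |\q'| + O(|\C^n|^2) \le U(n) + p(n)$.

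There is no genuinely hard step remaining: the substance lies in the reduction of Section~\ref{sec:queries} and in the size-faithful translations of Lemma~\ref{l:4}, and Theorem~\ref{thm.main} is essentially their bookkeeping-level consequence. The points that need attention are (a)~threading the single-constant hypothesis through precisely the statements that need it, namely Lemma~\ref{l:4}(ii) for the FO lower bound and Lemma~\ref{l:2} for both upper bounds; (b)~checking that the pre-rewriting obtained from a monotone formula for $f^n$ has size at most $U(n)$ (counting atoms as single symbols) and the one obtained from a monotone circuit has size $O(U(n))$, the constant factor in the latter being harmless under the standard conventions for comparing circuit node-counts with program symbol-counts; and (c)~confirming that the additive overhead $O(|\C^n|^2)$ contributed by Lemma~\ref{l:2} is polynomial in $n$, which is exactly what membership of $f^1,f^2,\dots$ in \NP\ guarantees.
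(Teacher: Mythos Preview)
Your proposal is correct and follows essentially the same approach as the paper: set $\q_n = \q_{f^n}$, $\T_n = \T_{f^n}$, derive the lower bounds directly from the three parts of Lemma~\ref{l:4}, and derive the upper bounds by converting a small circuit/formula for $f^n$ into a propositional NDL-query or FO-sentence correct on all $\A_{\vec\alpha}$ and then invoking Lemma~\ref{l:2} to promote it to a genuine rewriting over the single-constant signature. Your write-up is in fact more explicit than the paper's (which dispatches the whole theorem in four lines), and you correctly track which statements require the single-constant hypothesis; the only slip is in your closing remark~(b), where you write ``monotone formula'' but mean ``Boolean formula'' --- your main text handles this correctly.
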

\begin{proof}
(1) follows from Lemma~\ref{l:4}~(i). The first claim of (2) from Lemma~\ref{l:4}~(ii). To prove the second claim, take any circuit $\C^n$ computing $f^n$ and having size $\le U(n)$. By Lemmas~\ref{l:F} and~\ref{l1},
$(\T_{f^n},\A_{\vec{\alpha}}) \models \q_{f^n}$ iff $\C^n(\vec{\alpha})=1$, for all $\vec{\alpha}\in\{0,1\}^n$. It should be clear that $\C^n$ can be transformed into an NDL-query $(\Pi,G)$ of size $|\C^n|$ such that $\Pi, \A_{\vec{\alpha}} \models G$ iff $(\T_{f^n},\A_{\vec{\alpha}}) \models \q_{f^n}$.
Then we apply Lemma~\ref{l:2}. (3)~is proved analogously.
\end{proof}

%***********************

\section{Rewritings Long and Short}\label{s:7}

Now we apply Theorem~\ref{thm.main} to the Boolean functions mentioned in Section~\ref{sec:circuits} to demonstrate that some queries and ontologies may only have very long rewritings, and that rewritings of one type can be exponentially more succinct than rewritings of another type.

First we show that one cannot avoid an exponential blow-up for PE- and NDL-rewritings. We also show that even FO-rewritings can blow-up superpolynomially for signatures with a single constant under the assumption that $\NP \not\subseteq \Ppoly$.

\begin{theorem}\label{c1}
There is a sequence of CQs $\q_n$ of size $O(n)$ and $\OWLQL$ TBoxes $\T_n$ of size $O(n)$ such that\textup{:}
\begin{itemize}
\item[--] any PE-rewriting for $\q_n$ and $\T_n$ \textup{(}over any suitable signature\textup{)} is of size  $\geq 2^{\Omega(n^{1/4})}$\textup{;}

\item[--] any NDL-rewriting for $\q_n$ and $\T_n$ \textup{(}over any suitable signature\textup{)} is of size 
$\geq 2^{\Omega(({n/\log n})^{1/12})}$\textup{;}

\item[--] there does not exist a polynomial-size $FO$-rewriting for $\q_n$ and $\T_n$ over any suitable signature with a single constant unless $\NP \subseteq \Ppoly$.
\end{itemize}
\end{theorem}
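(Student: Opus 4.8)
The plan is to instantiate Theorem~\ref{thm.main} with the three families of monotone Boolean functions introduced in Section~\ref{sec:circuits}, choosing for each claim the family whose known circuit/formula lower bounds yield the stated bound on rewritings. The only technical point to check is that each family can be presented as a family of $\NP$-circuits of polynomial size (so that the $\q_{f^n}$ and $\T_{f^n}$ are of polynomial size, and then reparametrised to be linear in a new parameter $n$), which is exactly what Section~\ref{sec:circuits} already records via the explicit nondeterministic circuits of size $O(n^2)$ for $\cli$, $\mat$ and the polynomial monotone circuits for $\gen$.

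For the first item I would take $f = \cli$. Concretely, set $\q_n$ and $\T_n$ to be $\q_{\cli_{m,k}}$ and $\T_{\cli_{m,k}}$ for suitable $m,k$: since $|\q_{\cli_{m,k}}|, |\T_{\cli_{m,k}}| = O(|\C|^2) = O(m^4)$ and the circuit has $m$ advice inputs, the construction of Section~\ref{sec:queries} gives CQs and TBoxes of size polynomial in $m$. Taking $k$ of order $m^{2/3}/(\log m)^{2/3}$ (so that the monotone-circuit lower bound $2^{\Omega(\sqrt k)} = 2^{\Omega((m/\log m)^{1/3})}$ of~\cite{AlonB87} applies) and reparametrising by $n = \Theta(m^4)$ turns this into a sequence of size $O(n)$ (after a standard padding argument to make the size exactly linear) for which, by Theorem~\ref{thm.main}(1) and the monotone \emph{formula} lower bound $2^{\Omega(k)}$ for $k = 2m/3$ of~\cite{RazW92}, every PE-rewriting has size $\ge 2^{\Omega(m)} = 2^{\Omega(n^{1/4})}$. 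For the second item, Theorem~\ref{thm.main}(2) reduces NDL-rewriting lower bounds to monotone \emph{circuit} lower bounds, so I again use $\cli_{m,k}$ but now with $k \le \frac14 (m/\log m)^{2/3}$, giving the monotone-circuit bound $2^{\Omega(\sqrt k)} = 2^{\Omega((m/\log m)^{1/3})}$; with $n = \Theta(m^4)$ this is $2^{\Omega((n/\log n)^{1/12})}$, as claimed. For the third item, Theorem~\ref{thm.main}(3) turns FO-rewriting lower bounds into (non-monotone) Boolean \emph{formula} lower bounds over a single-constant signature; since $\cli$ is $\NP$-complete, a polynomial-size FO-rewriting would yield a polynomial-size Boolean formula — hence a polynomial-size Boolean circuit — computing $\cli_{m,k}$, which would put $\NP \subseteq \Ppoly$, so under $\NP \not\subseteq \Ppoly$ no such rewriting exists.

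The only genuinely fiddly part is the bookkeeping between the three parameters $m$ (graph size), $k$ (clique size) and $n$ (the final size parameter of $\q_n$, $\T_n$), together with ensuring that a single sequence $(\q_n,\T_n)$ simultaneously witnesses all three items. I would handle this by fixing $m$, letting $n = c\,m^4$ for the constant $c$ implicit in the $O(|\C|^2)$ bound (padding $\q_n$, $\T_n$ with dummy conjuncts/inclusions to hit exactly $n$), and using the \emph{same} circuit family for $\cli$ throughout; the three different lower bounds then come purely from plugging into the three parts of Theorem~\ref{thm.main} with the appropriate choice of $k$ as a function of $m$, which is legitimate because $L(n)$ in each part is just ``any valid lower bound for the relevant complexity measure at that $n$''. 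I expect no real obstacle beyond this parameter translation; all the hard work (the reduction from circuits to rewritings, and the circuit-complexity lower bounds themselves) is already available from Theorem~\ref{thm.main} and from the cited results~\cite{Razborov85,AlonB87,RazW92}.
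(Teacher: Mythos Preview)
Your proposal is correct and follows the paper's approach: instantiate Theorem~\ref{thm.main} with $\cli_{m,k}$, take $m \approx n^{1/4}$ so that $|\q_n|,|\T_n| = O(|\C|^2) = O(m^4) = O(n)$, and then invoke the monotone-formula bound of~\cite{RazW92} (with $k=2m/3$) for the first item, the monotone-circuit bound of~\cite{AlonB87} (with $k \approx (m/\log m)^{2/3}$) for the second, and the $\NP$-completeness of $\cli$ together with Theorem~\ref{thm.main}(3) for the third. This is exactly what the paper does, with the same parameter choices and the same citations.

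One small correction to your bookkeeping paragraph: you cannot use a \emph{single} sequence $(\q_n,\T_n)$ for all three items while letting $k$ vary, because $k$ is part of the definition of the function $\cli_{m,k}$ and hence determines $\q_{f^n}$ and $\T_{f^n}$; changing $k$ changes the sequence. Your sentence ``the three different lower bounds then come purely from plugging into the three parts of Theorem~\ref{thm.main} with the appropriate choice of $k$'' is therefore not quite right. The paper itself does not attempt this unification either: note the phrase ``a similar family with $k = \lfloor (m/\log m)^{2/3}\rfloor$'' in its proof of the NDL item, so the theorem is implicitly three separate existence statements. Once you drop the attempt to produce one sequence for all three bounds, your argument goes through verbatim.
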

\begin{proof}
Consider $f^n = \cli_{m,k}$ for $m = \lfloor n^{1/4} \rfloor$ and  $k = \lfloor 2m/3 \rfloor = \Omega(n^{1/4})$.
Then 
the size of $\q_n = \q_{f^n}$ and $\T_n = \T_{f^n}$ is $O(n)$.
The lower bound for PE-rewritings follows from Theorem~\ref{thm.main} and the lower bound for $\cli_{m,k}$~\cite{RazW92}.
The lower bound for NDL-rewritings is obtained by using a similar family with $k = \lfloor (m/ \log m)^{2/3} \rfloor  = \Omega((n/ \log n)^{1/6})$~\cite{AlonB87}.
If we assume $\NP \nsubseteq \Ppoly$ then there is no polynomial-size circuit for the \NP-complete function $\cli_{m,k}$, whence there is no polynomial-size FO-rewriting
of $\q_{f^n}$ and $\T_{f^n}$ over any signature containing a single constant.
\end{proof}

\begin{remark}\em
By the Karp-Lipton theorem (see, e.g.,~\cite{Arora&Barak09}) $\NP \subseteq \Ppoly$ implies $\PH = \Sigma_{2}^{p}$.
Thus, in Theorem~\ref{c1}, we can replace the assumption $\NP \not\subseteq \Ppoly$ with $\PH \neq \Sigma_{2}^{p}$.
\end{remark}

Next we show that NDL-rewritings can be exponentially more succinct than PE-rewritings.

\begin{theorem}
There is a sequence of CQs $\q_n$ of size $O(n)$ and $\OWLQL$
TBoxes $\T_n$ of size $O(n)$ for which there exists a polynomial-size NDL-rewriting
over a signature with a single constant, but any PE-rewriting over this signature is of size $\ge 2^{n^{\varepsilon}}$, for some $\varepsilon>0$.
\end{theorem}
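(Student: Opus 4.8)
The plan is to instantiate Theorem~\ref{thm.main} with the family $f^n = \gen_{n^3}$, which is the canonical example of a monotone Boolean function that separates monotone circuits from monotone formulas. Recall from Section~\ref{sec:circuits} that $\gen_{n^3}$ is computable by polynomial-size monotone Boolean circuits, yet every monotone formula computing it has size $2^{n^\varepsilon}$ for some $\varepsilon > 0$ \cite{RazM97}. Since $\gen_{n^3}$ is a monotone Boolean function (indeed a family in \NP, as it is polynomial-time computable), Theorem~\ref{thm.main} applies directly.

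First I would fix, for each $n$, the CQ $\q_n = \q_{f^n}$ and the \OWLQL{} TBox $\T_n = \T_{f^n}$ produced by the construction of Section~\ref{sec:queries}, starting from a polynomial-size monotone circuit $\C^n$ computing $\gen_{n^3}$. By the size estimates already established, $|\q_n| = O(|\C^n|^2)$ and $|\T_n| = O(|\C^n|^2)$, both polynomial in $n$; reindexing so that the input size is $\Theta(n)$, we get $\q_n$ and $\T_n$ of size $O(n)$ as required (one may have to rescale $n$ to $n^3$ or so, but this only changes constants in the exponents). Next, for the positive part of the statement, I would apply the second bullet of Theorem~\ref{thm.main}~(2): taking $U(n)$ to be the polynomial upper bound on the size of monotone circuits for $\gen$, there is an NDL-rewriting $(\Pi_n, G)$ for $\q_n$ and $\T_n$ over a signature with a single constant with $|\Pi_n| \le U(n) + p(n)$, which is polynomial. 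For the negative part, I would apply Theorem~\ref{thm.main}~(1): any PE-rewriting $\q'_n$ for $\q_n$ and $\T_n$ satisfies $|\q'_n| \ge L(n)$, where $L(n)$ is any lower bound on the size of monotone Boolean formulas computing $\gen$; by \cite{RazM97} we may take $L(n) = 2^{n^\varepsilon}$, giving the claimed $2^{n^\varepsilon}$ lower bound on PE-rewritings.

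The only subtlety to check is that both rewritings are over the \emph{same} signature with a single constant, so that the comparison is meaningful: the NDL-rewriting from Theorem~\ref{thm.main}~(2) is stated to work over any suitable signature with a single constant, and Theorem~\ref{thm.main}~(1) gives the PE lower bound for \emph{any} PE-rewriting, in particular over that signature, so this matches automatically. I would also double-check the bookkeeping on the exponent: the separation result for $\gen$ is $2^{n^\varepsilon}$ in terms of the number $n^3$ of variables of $\gen_{n^3}$, and after the polynomial blow-ups of the reduction the exponent is still of the form (new input size)$^{\varepsilon'}$ for a possibly smaller but still positive $\varepsilon'$, since a polynomial in a polynomial is a polynomial. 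Thus the final bound remains $2^{n^\varepsilon}$ for some $\varepsilon > 0$.

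I do not expect a genuine obstacle here — the theorem is essentially a plug-in corollary of Theorem~\ref{thm.main} together with the known circuit/formula separation for $\gen$. The only place requiring care is the routine exponent-tracking through the two polynomial reductions (circuit $\to$ CNF $\to$ \OWLQL{} instance, and the reindexing to make $|\q_n|, |\T_n| = O(n)$), to confirm that the monotone-formula lower bound survives as a genuinely superpolynomial (indeed $2^{n^\varepsilon}$) lower bound on PE-rewritings while the monotone-circuit upper bound still yields a polynomial NDL-rewriting.
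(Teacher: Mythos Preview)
Your proposal is correct and follows essentially the same approach as the paper: instantiate Theorem~\ref{thm.main} with the family $\gen_{m^3}$, use the polynomial monotone-circuit upper bound to get the NDL-rewriting via part~(2), use the Raz--McKenzie monotone-formula lower bound to get the PE lower bound via part~(1), and reindex so that $|\q_n|,|\T_n|=O(n)$ (the paper phrases this as choosing $m=\Theta(n^\delta)$ for suitable $\delta>0$). Your remarks on exponent-tracking and on the signature matching are exactly the bookkeeping the paper leaves implicit.
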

\begin{proof}
Consider the family $\gen_{m^3}$. There is a polynomial $p$ and monotone Boolean circuits $\C^{m^3}$ computing $\gen_{m^3}$ with  $|\C^{m^3}|\leq p(m)$. It follows that, for each $n$, we can choose $m$ so that, for $f^n = \gen_{m^3}$, the size of both $\q_n = \q_{f^n}$ and $\T_n = \T_{f^n}$ is $O(n)$. In fact, $m = \Theta(n^{\delta})$,  for some $\delta >0$.
By Theorem~\ref{thm.main} and the lower bounds on the circuit complexity of $\gen_{m^3}$~\cite{RazM97},
there is a polynomial NDL-rewriting of $\q_n$ and $\mathcal{T}_n$, but
any PE-rewriting of $\q_n$ and $\mathcal{T}_n$ is of size $\geq 2^{n^{\varepsilon}}$, 
for some $\varepsilon > 0$.
\end{proof}

FO-rewritings can also be substantially shorter than the PE-rewritings:

\begin{theorem} \label{cor.matching1}
There is a sequence of CQs $\q_n$ of size $O(n)$ and $\OWLQL$ TBoxes $\T_n$ of size $O(n)$ which has an FO-rewriting of size $n^{O(\log n)}$ over a signature with a single constant, but any PE-rewriting over this signature is of size $\ge 2^{\Omega(n^{1/4})}$.
\end{theorem}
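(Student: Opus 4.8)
The plan is to instantiate Theorem~\ref{thm.main} with the matching family $\mat_{2m}$, exploiting the gap between its non-monotone formula complexity $m^{O(\log m)}$ (due to~\cite{BorodinGH82}) and its monotone formula complexity $2^{\Omega(m)}$ (due to~\cite{RazW92}). First I would set $f^n = \mat_{2m}$ and choose $m$ so that the CQ $\q_n = \q_{f^n}$ and the TBox $\T_n = \T_{f^n}$ constructed in Section~\ref{sec:queries} have size $O(n)$; since $|\q_{f^n}|, |\T_{f^n}| = O(|\C^m|^2)$ where $\C^m$ is the nondeterministic circuit of size $O(m^2)$ with $m^2$ advice variables computing $\mat_{2m}$ (as recalled in Section~\ref{sec:circuits}), we get $|\q_n|, |\T_n| = O(m^4)$, so taking $m = \Theta(n^{1/4})$ works.

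Next I would apply part~(3) of Theorem~\ref{thm.main}: the upper bound $U(n)$ on the size of (non-monotone) Boolean formulas computing $\mat_{2m}$ is $m^{O(\log m)}$, so there is an FO-rewriting for $\q_n$ and $\T_n$ over any suitable signature with a single constant of size $U(n) + p(n) = m^{O(\log m)} + p(n) = n^{O(\log n)}$, using $m = \Theta(n^{1/4})$. For the lower bound on PE-rewritings I would apply part~(1): the lower bound $L(n)$ on the size of monotone Boolean formulas computing $\mat_{2m}$ is $2^{\Omega(m)}$, hence any PE-rewriting for $\q_n$ and $\T_n$ over this (indeed any) signature has size $\ge 2^{\Omega(m)} = 2^{\Omega(n^{1/4})}$. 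This gives exactly the two claimed bounds, and the separation is immediate since $n^{O(\log n)}$ is subexponential while $2^{\Omega(n^{1/4})}$ is exponential.

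The only genuinely delicate point is bookkeeping the relationship between $n$ and $m$ so that both the $O(n)$ size bounds and the two asymptotic estimates come out consistently with a single choice of $m$ — but this is routine once one notes $|\q_{f^n}|,|\T_{f^n}| = \Theta(|\C^m|^2) = \Theta(m^4)$ and that both $m^{O(\log m)}$ and $2^{\Omega(m)}$ are monotone in $m$, so substituting $m = \Theta(n^{1/4})$ is harmless. I expect no real obstacle: everything reduces to plugging the two known bounds on $\mat_{2m}$ into the machinery already established in Theorem~\ref{thm.main}. (As in the previous theorems, one should note the FO-rewriting genuinely uses negation — it is not a PE-rewriting — which is precisely why the monotone lower bound does not apply to it.)
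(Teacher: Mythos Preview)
Your proposal is correct and follows essentially the same approach as the paper: instantiate Theorem~\ref{thm.main} with $f^n = \mat_{2m}$ for $m = \Theta(n^{1/4})$, then invoke the $m^{O(\log m)}$ non-monotone formula upper bound of~\cite{BorodinGH82} for part~(3) and the $2^{\Omega(m)}$ monotone formula lower bound of~\cite{RazW92} for part~(1). Your bookkeeping for why $m = \Theta(n^{1/4})$ yields $|\q_n|,|\T_n| = O(n)$ is in fact more explicit than the paper's own one-line assertion.
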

\begin{proof}
Consider $f^n = \mat_{2m}$ with $m = \lfloor n^{1/4} \rfloor$.
Then the size of both $\q_n = \q_{f^n}$ and $\T_n = \T_{f^n}$ is $O(n)$.
By Theorem~\ref{thm.main} and the bounds for circuit complexity of $\mat_{2m}$~\cite{RazW92,BorodinGH82}, we obtain the required
lower bound for PE-rewritings and the required upper bound for FO-rewritings; note that
$(n^{{1/4}})^{\log n^{1/4}} = n^{O(\log n)}$.
\end{proof}

In fact, we can use a standard trick from the circuit complexity theory to show that FO-rewritings can be superpolynomially more succinct than PE-rewritings.

\begin{theorem}
There is a sequence of CQs $\q_n$ of size $O(n)$ and $\OWLQL$ TBoxes $\T_n$ of size $O(n)$ which has a polynomial-size FO-rewriting over a signature with a single constant, but any PE-rewriting over this signature is of size $\ge 2^{\Omega(2^{\log^{1/2} n})}$.
\end{theorem}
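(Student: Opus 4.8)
The plan is to rerun the argument of Theorem~\ref{cor.matching1} for the perfect-matching family $\mat$, but with a much slower, ``padded'' choice of the matching parameter, so that the quasi-polynomial non-monotone formula upper bound $m^{O(\log m)}$ turns into a genuinely polynomial bound in $n$, at the cost of replacing the $2^{\Omega(n^{1/4})}$ \textsc{pe}-lower bound by a merely super-polynomial (but still super-polynomial) one. Recall from Section~\ref{sec:circuits} that $\mat_{2m}$ is a monotone Boolean function in \NP{} that is computed by a nondeterministic circuit of size $O(m^2)$, that every \emph{monotone} formula computing $\mat_{2m}$ has size $2^{\Omega(m)}$~\cite{RazW92}, and that $\mat_{2m}$ has a (non-monotone) Boolean formula of size $m^{O(\log m)}$~\cite{BorodinGH82}. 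All the genuine content of the theorem already lives in Theorem~\ref{thm.main} together with these three facts, so the only real task is to pick the parameter correctly.

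First I would fix the parametrization $m = m(n) = 2^{\lfloor \log^{1/2} n \rfloor}$, so that $m(n) = \Theta\bigl(2^{\log^{1/2} n}\bigr)$ and $\log m(n) = \Theta(\log^{1/2} n)$; put $f^n = \mat_{2m(n)}$; and let $\q_n = \q_{f^n}$ and $\T_n = \T_{f^n}$ be the CQ and \OWLQL{} TBox built from the size-$O(m(n)^2)$ nondeterministic circuit $\C^n$ for $\mat_{2m(n)}$ exactly as in Section~\ref{sec:queries}. By that construction, $|\q_n|, |\T_n| = O(|\C^n|^2) = O(m(n)^4) = 2^{O(\log^{1/2} n)} = n^{o(1)}$, so in particular $\q_n$ and $\T_n$ are of size $O(n)$ (if a genuinely linear size is wanted, one can append $\Theta(n)$ dummy inclusions over fresh concept names to $\T_n$, which affects none of the rewritings, since the lemmas of Section~\ref{s:6} replace every atom other than $A_0(a)$ and the $Z_{0,j}(a)$ by $\bot$).

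It then remains to invoke Theorem~\ref{thm.main} in both directions. For the upper bound I would use part~(3) with $U(n) = m(n)^{O(\log m(n))}$: since $\log m(n) = O(\log^{1/2} n)$, we get $U(n) = 2^{O((\log m(n))^2)} = 2^{O(\log n)} = n^{O(1)}$, and the additive polynomial term is again polynomial in $n$, so $\q_n$ and $\T_n$ have an FO-rewriting of polynomial size over any suitable signature with a single constant. For the lower bound I would use part~(1) with $L(n) = 2^{\Omega(m(n))} = 2^{\Omega(2^{\log^{1/2} n})}$, which bounds from below the size of every \textsc{pe}-rewriting for $\q_n$ and $\T_n$ over any suitable signature; note that $2^{\Omega(2^{\log^{1/2} n})} = n^{\omega(1)}$ is super-polynomial yet sub-exponential in $n$. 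Combining the two bounds gives the theorem. The one point demanding care — the ``main obstacle'', such as it is — is calibrating $m(n)$: it must grow slowly enough that $m(n)^{O(\log m(n))}$ collapses to $\mathrm{poly}(n)$, which forces $\log m(n) = O(\log^{1/2} n)$, yet fast enough that $2^{\Omega(m(n))}$ is super-polynomial in $n$ and matches the claimed $2^{\Omega(2^{\log^{1/2} n})}$, which forces $m(n) = \Omega(2^{\log^{1/2} n})$; the choice $m(n) = \Theta(2^{\log^{1/2} n})$ is the unique sweet spot, and one then merely has to check that $O(m(n)^4)$ is still $O(n)$, which it is.
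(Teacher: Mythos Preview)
Your proposal is correct and follows essentially the same route as the paper: both choose $m=\Theta(2^{\log^{1/2}n})$, invoke the $m^{O(\log m)}$ non-monotone formula upper bound and the $2^{\Omega(m)}$ monotone formula lower bound for $\mat_{2m}$, and feed these through Theorem~\ref{thm.main}~(3) and~(1) respectively. The only cosmetic difference is in handling the ``size $O(n)$'' clause: the paper pads $f^n$ with $\lfloor n^{1/4}\rfloor - m$ dummy Boolean variables so that the construction of Section~\ref{sec:queries} literally reproduces the $O(n)$ sizes of Theorem~\ref{cor.matching1}, whereas you simply observe that $|\q_n|,|\T_n|=O(m^4)=n^{o(1)}=O(n)$ without any padding, which is equally valid and arguably cleaner.
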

\begin{proof}
Consider $f^n = \mat_{2m}$ with 
$m = \lfloor 2^{\log^{1/2} n}\rfloor$ variables and
add $\lfloor n^{1/4}\rfloor - m$ new dummy  variables to each $f^n$.
Then the size of both $\q_n = \q_{f^n}$ and $\T_n = \T_{f^n}$ is $O(n)$.
But now Theorem~\ref{thm.main} and the bounds for the circuit complexity of $\mat_{2m}$~\cite{RazW92,BorodinGH82} give the
$m^{O(\log m)} = n^{O(1)}$ upper bound for the size of FO-rewritings and the $2^{\Omega(m)} = 2^{
\Omega(2^{\log^{1/2}n})}$ lower bound for the size of PE-rewritings.
\end{proof}

\section{Short Impure Rewritings}\label{s:equality}

In the proof of Theorem~\ref{c1}, we used CQs %$\q_n = \q_{\cli_{m,k}}$ 
containing no constant symbols. It follows that the theorem will still hold if we allow the built-in predicates $=$ and $\ne$ in the rewritings, but disallow the use of constants that \emph{do not occur in the original query}.
The situation changes drastically if $=$, $\ne$ and two additional constants, say 0 and 1, are allowed in the rewritings. As shown by Gottlob and Schwentick~\cite{GottlobS11}, in this case there is a polynomial-size NDL-rewriting for any CQ and \OWLQL{} TBox. Roughly, the rewriting uses the extra expressive resources to encode in a succinct way the part of the canonical model that is relevant to  answering the given query. We call rewritings of this kind   \emph{impure} (indicating thereby that they use predicates and constants that do not occur in the original query and ontology). In fact, using the ideas of~\cite{Avigad01} and~\cite{GottlobS11}, one can  construct an impure polynomial-size PE-rewriting for any CQ and \OWLQL{} TBox:
\begin{theorem}
For every CQ $\q$ and every \OWLQL{} TBox $\T$, there is an impure PE-rewriting $\q'$ for $\q$ and $\T$ whose size is polynomial in $|\q|$ and $|\T|$.
\end{theorem}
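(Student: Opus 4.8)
The plan is to build the impure PE-rewriting in two stages, mirroring the structure already used for NDL in the paper. First I would reconstruct, in a succinct form, the part of the canonical model $\CIK$ that is relevant to a CQ $\q$ of size $k = |\q|$. Since every match of $\q$ in $\CIK$ lives on a union of at most $k$ generating paths, and each such path is eventually periodic with a period and prefix bounded by $|\T|$, only paths of length $O(k\cdot|\T|)$ matter. I would encode a position on such a path by a bounded-length tuple of constants over $\{0,1\}$: the point $w_{[RB]}$ is identified by (a binary code of) the equivalence class $[R]$, $[B]$ together with the ABox individual at the root and a depth counter written in binary. Using $=$ and the two constants $0,1$ one can compare, increment, and test such codes with quantifier-free formulas of size polynomial in $|\T|$ and $\log(k|\T|)$, exactly as in~\cite{GottlobS11} and~\cite{Avigad01}; the novelty here is only that we keep everything positive existential rather than packaging it as Datalog.

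Second, I would write $\q'$ as $\exists \vec u\, \exists \vec v\, \bigl[\psi_{\mathrm{cons}}(\vec v) \lor \bigwedge_{\text{atoms of }\q} \chi(\ldots)\bigr]$, where $\vec u$ are variables ranging over ABox individuals (the genuine elements of $\Delta^{\CIK}\cap\ind(\A)$), each ``anonymous'' query variable is represented by a pair: an ABox variable naming the root of its path plus a tuple of $\{0,1\}$-coded variables naming its position on that path, and $\chi$ is a quantifier-free positive formula (built with $\land,\lor$) asserting that the chosen codes describe a genuine point of $\CIK$, that the code of the relevant concept/role membership holds (a disjunction over the finitely many $[R],[B]$ consistent with the axioms), and that the edge relations demanded by the atoms of $\q$ hold between consecutive coded points. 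The subformula $\psi_{\mathrm{cons}}$ handles the inconsistent-KB case by a short disjunction of the finitely many ``clash'' patterns $B_1\sqcap B_2\sqsubseteq_\T\bot$, $R_1\sqcap R_2\sqsubseteq_\T\bot$. Correctness follows from Theorem~1 (the canonical model characterisation): $(\T,\A)\models\q(\vec a)$ iff $\CIK\models\q(\vec a)$ iff there is a homomorphism, which by the bounds above can be witnessed by codes of polynomial length, which is precisely what $\exists\vec u\,\exists\vec v$ makes true in $\I_\A$. The size is polynomial because each $\chi$ has size polynomial in $|\T|$ and $|\q|$, there are $O(|\q|)$ of them, and the code tuples have length $O(\log(|\q|\cdot|\T|))$ many $\{0,1\}$-variables.

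The main obstacle I expect is making the ``$\vec v$ codes describe a genuine point of $\CIK$'' condition both correct and positive existential. The generating relation $\leadsto$ has negative side-conditions (e.g.\ ``there is no $b\in\ind(\A)$ with $\mathcal{K}\models R(a,b)\land B(b)$'', and the $R'\sqsubseteq_\T R^-$ loop-back exclusion), so a literal transcription would need negation. The fix is that all these side-conditions depend only on $\T$ and on the concept/role atoms asserted at a single ABox individual; for a fixed $\T$ there are only constantly many ``local ABox types'' that can clash or fold, so the condition can be replaced by a finite disjunction (over admissible types) of positive quantifier-free formulas that directly list, for each type, which successor witnesses are actually generated. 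Equivalently, one precomputes offline the (polynomial-size in $|\T|$) automaton of the eventually-periodic paths and hard-codes its transition table as a disjunction. Handling answer variables and ABox atoms that connect two distinct paths needs a little care — one must allow a coded anonymous point to ``be'' an ABox individual when its depth code is $0$ — but this is the same device already used implicitly in Lemma~\ref{l1}. Once the path-type table is in place, the rest is routine bookkeeping of sizes.
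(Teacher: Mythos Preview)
Your approach differs from the paper's. The paper does not encode positions in $\CIK$ at all; instead it works in the more general setting of TGDs with the polynomial witness property and encodes a bounded-length \emph{chase derivation}: it existentially guesses $N$ ground atoms $\tau_1,\dots,\tau_N$ (predicate names and arguments coded via auxiliary $\{0,1\}$-variables and $=$), and for each $\tau_i$ writes a positive formula saying ``$\tau_i$ is an ABox atom, or some earlier $\tau_{i'}$ match the body of a TGD whose head instantiates to $\tau_i$'', together with a positive check that $\q$ matches among the $\tau_i$. Because chase steps are just rule applications, this is inherently positive existential and the negative side-conditions of $\leadsto$ never appear. Your route---coding each query variable by (root individual, depth, witness-type)---is also workable and arguably more natural for \OWLQL{} specifically, but it forces you to confront those side-conditions head-on.

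There is a genuine gap in your proposed fix for them. The first side-condition, ``there is no $b\in\ind(\A)$ with $\K\models R(a,b)\land B(b)$'', is \emph{not} determined by the concept/role atoms asserted at the single individual $a$: it quantifies over all other ABox individuals $b$, so it cannot be absorbed into a finite disjunction over ``local ABox types'' at $a$, and your automaton over witness-types does not see it either (the automaton lives entirely in the anonymous part). The clean repair is not to enumerate types but simply to \emph{drop} both side-conditions and work with the oblivious chase: always create $w_{[RB]}$ whenever $\exists R.B$ is entailed, regardless of existing witnesses. The resulting tree is still a universal model for CQs (it maps homomorphically into $\CIK$ and into every model of $\K$), so it yields the same certain answers; your depth bound $O(|\q|\cdot|\T|)$ survives because witness-types still repeat with period at most $|\T|$; and the predicate ``$\vec v$ codes a genuine point'' becomes purely positive. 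With that change the remainder of your plan goes through, and you end up computing essentially the same object as the paper, reached from the model side rather than the derivation side.
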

\begin{proof}
We illustrate the idea of the proof for a larger ontology language of tuple-generating dependencies (TGDs). CQ answering under TGDs is undecidable in general~\cite{BeeriVardi81}. However,  certain classes of TGDs (linear, sticky, etc.~\cite{CaliGL09,CaliGP10}) enjoy the  so-called polynomial witness property (PWP)~\cite{GottlobS11}, which guarantees that, for each CQ $\q$ and each set $\mathcal{T}$ of TGDs from the class, there is a number $N$ polynomial in $|\q|$ and $|\mathcal{T}|$ such that, for each ABox $\mathcal{A}$, there is a sequence of $N$ chase steps that entail $\q$. \OWLQL{} has PWP because its concept and role inclusions are special cases of linear TGDs.

So, suppose we have a set $\mathcal{T}$ of TGDs from a class enjoying PWP.
Without loss of generality we may assume that all predicates are of arity $L$, all TGDs have precisely $m$ atoms in the body and there is at most one existentially quantified variables in the head (see e.g.,~\cite{GottlobS11}), i.e., the TGDs are formulas of the form
\begin{equation*}
\forall \vec{x} \,\bigl(P_1(\vec{t}_1) \land \dots \land P_m(\vec{t}_m) \to \exists z\, P_0(\vec{t}_0)\bigr),
\end{equation*}
where each vector $\vec{t}_1,\dots,\vec{t}_m$ consists of $L$ (not necessarily distinct) variables from $\vec{x}$ (they are  universally quantified) and each of the $L$ variables of $\vec{t}_0$ either coincides with one of the $\vec{x}$ (in which case it is universally quantified) or equals $z$ (in which case it is existentially quantified).
Consider a CQ without free variables
\begin{equation*}
\q = \exists \vec{y}\,\bigwedge_{k = 1}^{|\q|} R_k(y_{k1},\dots,y_{kL}).
\end{equation*}
By PWP, there is a number $N$, polynomial in $|\q|$ and $|\mathcal{T}|$, such that, for any ABox $\mathcal{A}$, the query $\q$ is true on atoms of $N$ steps of the chase for $\mathcal{T}$ and $\mathcal{A}$ (provided that $(\mathcal{T},\mathcal{A})\models \q$).
In essence, our PE-rewriting guesses these $N$ ground atoms $\tau_1,\dots,\tau_N$ of the chase for $(\mathcal{T},\mathcal{A})$ and then checks whether the guess is a positive answer to $\q$ and the atoms indeed form steps of the chase for $(\mathcal{T},\mathcal{A})$. For each chase step $1 \leq i \leq N$, we will need the following variables:
\begin{itemize}
\item[--] $u_{i1},\dots,u_{iL}$ are the arguments of the ground atom $\tau_i$ and range over the ABox domain and the labelled nulls $\textit{null}_i$ (all these labelled nulls can be thought of as natural numbers not exceeding $N$);
\item[--] $r_i$ is the number of the predicate of $\tau_i$ (each predicate name $P$ is given a unique number, denoted by $[P]$); so, $r_i$ with $u_{i1},\dots,u_{iL}$ encode $\tau_i$;
%\item $f_i = 0$ if the tuple is in the ABox and $1$ otherwise;
%\item $s_i$ is the axioms number that `generated' $\tau_i$ (0 if the atom is in the ABox);
%\item $c_{i1},\dots,c_{iK}$ are numbers of the ground atoms $\tau_{c_{i1}},\dots,\tau_{c_{iK}}$ that can be unified with the body of a tgd from $\mathcal{T}$ to `produce' the ground atom $\tau_i$ by an application of the chase rule; each $c_{ik}$ is required to be less than $i$, as the chase rule can only be applied to the atoms that have already been produced.
\item[--] $w_{i1},\dots,w_{i\ell}$, where $\ell$ is the maximum length of the $\vec{x}$ in TGDs, are the arguments of the body of the TGD that generated $\tau_i$; they also range over the ABox domain and the labelled nulls (clearly, $\ell$ does not exceed $m\cdot L$).
\end{itemize}
The PE-rewriting is then defined by taking
\begin{equation*}
\q' = \exists \vec{y} \exists \vec{u}\exists \vec{r}  \exists \vec{w}\,\Bigl(\bigwedge_{k = 1}^{|\q|}\bigvee_{i = 1}^N \Bigl[(r_i = [R_k])\land \bigwedge_{j = 1}^L (u_{ij} = y_{kj})\Bigr] \land \bigwedge_{i = 1}^N \bigvee \Phi_i \Bigr).
\end{equation*}
The first conjunct of the rewriting chooses, for each atom in the query, one of the ground atoms $\tau_1,\dots,\tau_N$ in such a way that its predicate coincides with the query atom's predicate and the arguments match. The second conjunct chooses, for each ground atom $\tau_1,\dots,\tau_N$, the number of a TGD that produces it or 0, if the atom is taken from the ABox. So, the set of formulas $\Phi_i$ contains
\begin{equation*}
\bigvee_{P \text{ is a predicate}} \bigl((r_i = [P]) \land P(u_{i1},\dots,u_{iL})\bigr)
\end{equation*}
for the case when $\tau_i$ is taken from the ABox ($r_i$ is such that $P(u_{i1},\dots,u_{iL})$ is in the ABox for the predicate $P$ with the number $r_i$) and the following disjunct, for each TGD
\begin{equation*}
\forall \vec{x} \,\bigl(P_1(t_{11},\dots,t_{1L}) \land \dots \land P_m(t_{m1},\dots,t_{mL}) \to \exists z\, P_0(t_{01},\dots,t_{0L})\bigr)
\end{equation*}
in $\mathcal{T}$, modelling the corresponding chase rule application:
\begin{multline*}
(r_i = [P_0]) \land \bigwedge_{t_{0j} = x_l} (u_{ij} = w_{il}) \land\hspace*{-0em} \bigwedge_{t_{0j} = z}\hspace*{-0em} (u_{ij} = \textit{null}_i) \land {} \\
\bigwedge_{k = 1}^m \bigvee_{i' = 1}^{i-1} \bigl( (r_{i'} = [P_k]) \land \bigwedge_{t_{kj} = x_l} (w_{il} = u_{i'j})\bigr).
%\bigwedge_{j = 1}^L  (x_{ij} = \begin{cases}x_{c_{ik}j'}, & \text{ if } y^m_j = z^m_{kj'},\\ (i-1)*L + j, & \text{ if } y^m_j \text{ existential}.\end{cases})
\end{multline*}
%
%and $w_{ij} = $
Informally, if $\tau_i$ is generated by an application of the TGD above, then $r_i$ is the number  $[P_0]$ of the head predicate $P_0$ and the existential variable $z$ of the head gets the unique null value $\textit{null}_i$ (third conjunct). Then, for each of the $m$ atoms of the body, one can choose a number $i'$ that is \emph{less than $i$} such that the predicate of $\tau_{i'}$ is the same as the predicate of the body atom and their arguments match (the last two conjuncts). The variables $w_{il}$ ensure that the same universally quantified variable gets the same value in different body atoms and in the head (if it occurs there, see the second conjunct).

%The length of subformulas: $|\Phi_i^0| \leq 2|\mathcal{T}|\cdot m$, $|\Phi_i^m| \leq (m\cdot N + 1) \cdot (L + 1)$. So, $|\q'| \leq %|\q| \cdot N \cdot (L + 1) + N \cdot (2|\mathcal{T}|\cdot m + |\mathcal{T}|\cdot (m\cdot N + 1) \cdot (L + 1))$.

It can be verified that $|\q'|=O(|\q|\cdot |\mathcal{T}|\cdot N^2 \cdot L)$ and that $(\mathcal{T},\mathcal{A})\models\q$ iff $\q'$ is true in the model $\mathcal{I}_\mathcal{A}$ extended with constants $1,\dots,N$ (these constants are distinct and do not belong to the interpretation of any predicate but $=$).

It should be noted that one can replace the numbers in the rewriting with just two  constants 0 and 1 (again,
with only $=$ interpreted over them). Each of the variables $u_{ij}$ can be replaced with a tuple $\bar{u}_{ij},u_{ij}^p,\dots,u_{ij}^0$ of variables with $p = \lceil\log N\rceil$ such that $\bar{u}_{ij}$ ranges over the ABox elements and $u_{ij}^p,\dots,u_{ij}^0$ range over $\{0,1\}$ and thus represent a number up to $N$. Similarly, we replace the $w_{il}$ and $r_i$. Each labelled null $\textit{null}_{i}$ is then replaced by the constant tuple representing the number $(i-1)$ in binary; the constants $[P]$ for the numbers of predicates $P$ are dealt with similarly. Finally, the equality atoms in the rewriting are replaced by the component-wise equalities and each $P(u_{i1},\dots,u_{iL})$ is replaced by $P(\bar{u}_{i1},\dots,\bar{u}_{iL})\land \bigwedge_{j = 1}^L\bigwedge_{k = 0}^p (u_{ij}^k = 0)$.
\end{proof}

Thus, we obtain the following:
\begin{theorem}
Impure PE- and NDL-rewritings for CQs and \OWLQL{} ontologies are exponentially more succinct than pure PE- and NDL-rewritings.
\end{theorem}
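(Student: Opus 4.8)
The plan is to take for $\q_n$ and $\T_n$ exactly the family built in the proof of Theorem~\ref{c1} from the \NP-family $\cli_{m,k}$ via the encoding of Section~\ref{sec:queries}: these are CQs and \OWLQL{} TBoxes of size $O(n)$, and crucially the $\q_n$ contain \emph{no constant symbols}. The first step is to record what Theorem~\ref{c1} says about them once $=$ and $\ne$ are permitted in rewritings (but no fresh constants): since the $\q_n$ are constant-free, every pure rewriting for $\q_n$ and $\T_n$ is constant-free, so over the single-individual ABoxes $\A_{\vec\alpha}$ the built-in predicates $=$ and $\ne$ are trivially evaluable (forced to $\top$ and $\bot$). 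Hence the reductions of Lemmas~\ref{l:4} and~\ref{l:2} go through unchanged in the presence of $=$ and $\ne$, and Theorem~\ref{c1} remains valid in that setting (as observed at the start of this section): every pure PE-rewriting — even with $=$ and $\ne$ — for $\q_n$ and $\T_n$ has size $\ge 2^{\Omega(n^{1/4})}$, and every pure NDL-rewriting has size $\ge 2^{\Omega((n/\log n)^{1/12})}$.

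The second step supplies the matching upper bounds for impure rewritings. The theorem immediately preceding this one yields an impure PE-rewriting for $\q_n$ and $\T_n$ of size polynomial in $|\q_n|$ and $|\T_n|$, hence polynomial in $n$; and the construction of Gottlob and Schwentick~\cite{GottlobS11} yields an impure NDL-rewriting of polynomial size as well, both using only the built-in predicates $=$, $\ne$ together with two extra constants $0,1$. Combining the two steps: for this single family of queries and ontologies there are impure PE- and NDL-rewritings of size $n^{O(1)}$, while every pure PE- or NDL-rewriting has size $2^{n^{\Omega(1)}}$; so the impure rewritings of $\q_n$ and $\T_n$ are exponentially shorter than all pure ones, which is exactly the assertion.

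There is no essential difficulty here — the statement is a corollary of Theorem~\ref{c1}, the preceding theorem, and~\cite{GottlobS11}. The only point that requires care, and the only reason a new argument is needed at all, is the robustness of Theorem~\ref{c1}'s lower bounds under the extension of pure rewritings by $=$ and $\ne$; this is precisely why the constant-free family $\q_n$, $\T_n$ is used, so that the separation is genuinely between pure and impure rewritings of \emph{the same} query and ontology and not an artefact of changing the available built-in predicates.
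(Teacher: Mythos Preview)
Your proposal is correct and follows the same route as the paper: the theorem is an immediate corollary of Theorem~\ref{c1} (exponential lower bounds for pure PE- and NDL-rewritings of the $\cli$-based family $\q_n,\T_n$) together with the preceding theorem and~\cite{GottlobS11} (polynomial-size impure PE- and NDL-rewritings for any CQ and \OWLQL{} TBox). Your extra care about robustness of the lower bounds under $=$ and $\ne$ is not strictly required for the theorem as stated---pure rewritings, by the paper's conventions, do not have access to these predicates---but it is a sound elaboration, and the paper makes the same observation at the opening of this section.
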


%Smg on new variables and quantifiers in rewritings?

%**********************

\section{Conclusion}

The exponential lower bounds for the size of `pure' rewritings above may look discouraging in the OBDA context. It is to be noted, however, that the ontologies and queries used their proofs are extremely `artificial' and never occur in practice (see the analysis in~\cite{KR12our}). As demonstrated by the existing description logic reasoners (such as FaCT\raisebox{1pt}{\scriptsize++}, HermiT, Pellet, Racer), \emph{real-world} ontologies can be classified efficiently despite the high worst-case complexity of the classification problem. We believe that practical query answering over \OWLQL{} ontologies can be feasible if supported by suitable optimisation and indexing techniques.  It also remains to be seen whether polynomial impure rewritings can be used in practice.

We conclude the paper by mentioning two open problems.
Our exponential lower bounds were proved for a sequence of pairs $(\q_n, \T_n)$. It is unclear whether these bounds hold uniformly for all $\q_n$ over the same $\T$:

\begin{question}
Do there exist an \OWLQL{} TBox $\T$ and CQs $\q_n$ such that any pure PE- or NDL-rewritings for $\q_n$ and $\T$ are of exponential size?
\end{question}
As we saw, both FO- and NDL-rewritings are more succinct than PE-rewritings.
\begin{question}
What is the relation between the size of FO- and NDL-rewritings?
\end{question}

%Our second question is an attempt to generalised the setting of the combined approach of~\cite{KR10our}.
%
%\begin{question}
%Does there exist an interpretation $\I_\K$, which can be constructed in polynomial time for each $\K=(\T, \A)$, and a `simple' \textup{(}\`a la  the FO-rewriting in \cite{KR10our}\textup{)} polynomial FO-rewriting $\q'$ for a given CQ $\q$ and $\mathcal{T}$ such that $\K \models \q(\vec{a})$ iff $\I_\K \models \q'(\vec{a})$, $\vec{a} \subseteq \ind(\A)$.
%\end{question}

\medskip
\noindent
{\bf Acknowledgments.} We thank the anonymous referees at ICALP 2012 for their constructive feedback and suggestions. This paper was supported by the U.K.~EPSRC grant EP/H05099X.

\end{document}